\providecommand{\U}[1]{\protect\rule{.1in}{.1in}}
\newtheorem{theorem}{Theorem}
\newtheorem{definition}{Definition}
\newtheorem{remark}{Remark}
\newtheorem{example}{Example}
\let\oldexample\example
\renewcommand{\example}{\oldexample\normalfont}
\newenvironment{proof}[1][Proof]{\noindent \textbf{#1:} }{\hfill \rule{0.5em}{0.5em}}
\begin{document}

\title{Egalitarian Pooling and Sharing of Longevity Risk \\a.k.a. \\\emph{The Many Ways to Skin a Tontine Cat}}
\author{Jan Dhaene\\Actuarial Research Group, AFI\\Faculty of Business and Economics\\KU Leuven\\B-3000 Leuven, Belgium
\and Moshe A. Milevsky\\Finance Area \\Schulich School of Business \\York University \\Toronto, CANADA}
\maketitle

\begin{abstract}
There is little disagreement among insurance actuaries and financial
economists about the societal benefits of longevity-risk pooling in the form
of life annuities, defined benefit pensions, self-annuitization funds, and
even tontine schemes. Indeed, the discounted value or cost of providing an
income for life is lower -- in other words, the amount of upfront capital
required to generate a similar income stream with the same level of
statistical safety is lower -- when participants pool their financial
resources versus going it alone. Moreover, when participants' financial
circumstances and lifespans are homogenous, there is consensus on how to share
the \textquotedblleft winnings\textquotedblright\ among survivors, namely by
distributing them equally among survivors, a.k.a. a uniform rule. Alas, what
is lesser-known and much more problematic is allocating the winnings in such a
pool when participants differ in wealth (contributions) and health
(longevity), especially when the pools are relatively small in size. The same
problems arise when viewed from the dual perspective of decentralized risk
sharing (DRS). The positive correlation between health and income and the fact
that wealthier participants are likely to live longer is a growing concern
among pension and retirement policymakers. With that motivation in mind, this
paper offers a modelling framework for distributing longevity-risk pools'
income and benefits (or tontine winnings) when participants are heterogeneous.
Similar to the nascent literature on decentralized risk sharing, there are
several equally plausible arrangements for sharing benefits (a.k.a.
\textquotedblleft skinning the cat\textquotedblright) among survivors.
Moreover, the selected rule depends on the extent of social cohesion within
the longevity risk pool, ranging from solidarity and altruism to pure
individualism. In sum, actuarial science cannot really offer or guarantee
uniqueness, only a methodology.

\end{abstract}

\newpage

\section{Introduction}

\subsection{Motivation}

One of the hallmarks of a developed country is the existance of a national
pension scheme which forces all working citizens to contribute savings to a
retirement collective, which is then used to pay retirement annuities.
National pension schemes are distinct from corporate retirement plans, which
(arguably) involve more homogenous groups and whose financial generosity is
(arguably) at the discretion of employers.

For example, in a stylized national pension scheme, all workers might
contribute 10,000 (real, inflation-adjusted) euros per year in mandatory
premiums in exchange for a benefit of 27,000 (real, inflation-adjusted) euros
per year beginning at retirement age 65. Therefore, a citizen who makes these
contributions during 40 working years, for example, from age 25 until age 65,
and then lives to (and then dies exactly at) age 85, will earn an internal
rate of return of 1\% per year, in real terms.\footnote{Explanation: The
40-year FV of the 10,000 euro is 487,766 euro, which is equal to the 20-year
PV of the 27,000 euro, when the rate is equal to 1\%. And, since both cash
flows are inflation-adjusted, the implied rate is real. These numbers (2.7
multiple between benefit and contribution) do not correspond to any particular
country and are meant as an indicative example.}

This compares favourably with other real risk-free investments around the
world and is effectively guaranteed by the national government. The
paternalistic calculus is to force all citizens to participate since many
would be unlikely to do so on their own or be able to generate these
investment returns themselves.

Alas, the challenge -- and impetus for this investigation -- is what happens
to those unhealthy retirees who don't spend 20 years in retirement and do not
live to age 85 in the above example. Those who are unfortunate to live 10
years to age 75, for example, will actually earn a negative internal rate of
return of -1.6\% per year in real (inflation-adjusted) terms. Indeed,
contributing a (non-PV adjusted) total of 400,000 euros over the entire life
in exchange for only 270,000 euros is not a good investment, especially
considering they were forced to participate.

Now, in defence of this stylized national pension scheme, the conventional and
centuries-old response by pension economists and insurance actuaries is that
for every unlucky person who only lives to age 75 there is another lucky one
who lives to a ripe old age 95. They would receive 810,000 = 30 x 27,000 euro
worth of payments during retirement and thus earn an even better 2\%
inflation-adjusted return.\footnote{The FV for the 40 years (of 10,000) is
equal to the PV for the 30 years (of 27,000) when the rate is set to 2\%.}
Moreover, these defenders argue, that is the nature of longevity risk pooling.
Winners and losers are only known at the end, \emph{ex post}, but everyone
benefits from longevity pooling, \emph{ex ante}.

Unfortunately, there is a well-known and alarming body of evidence that
survival in a group of equally aged persons is not analogous to a series of
i.i.d. coin tosses. Whether due to genetics, environment or even lifestyle
choices, longevity prospects are heterogenous for individuals at the same
chronological age. The most widely cited work documenting this effect is
Chetty et al. (2016). For these less fortunate groups, there is little chance
they will live to an advanced age and benefit from the insurance aspects of
longevity risk pooling. Within society these groups have a legitimate claim
that national pension schemes aren`t fair or equitable. The most widely cited
article linking income (or wealth) to life expectancy (in the US) is the study
by Chetty et al. (2016). Other researchers have looked at non-financial
factors for the so-called \emph{stochastic longevity gap} (a.k.a. the
non-homogeneity) vis a vis the implications for pension plans, both from a
theoretical and empirical perspective. More on this later.

This problem is more than an academic exercise in probability or a theoretical
dilemma. In early 2023, a group representing Aboriginal Australians filed suit
claiming the state pension discriminates against them because their life
expectancy is much lower than non-aboriginals. Most live to their mid 70s,
while the rest of Australia live into their 80s and 90s. Although stochastic,
their internal rate of return will fall far short of the safest risk-free
alternative. The case has garnered much international attention and is pending
before their Federal Court, and various groups are expressing similar concerns
worldwide. This is the impetus for our paper; namely, it goes back to the very
first principles and asks \emph{how should longevity risk be shared?} Another
case in which this arises -- and perhaps slightly more controversial position
-- is that unhealthy males with much lower longevity prospects should (also)
be considered for similar treatment; namely, receive higher payments for the
same level of contributions.

The literature -- on the heterogeneity of longevity and the impact on pension
fairness -- is vast and growing.\footnote{See, Ayuso et al. (2017), Bravo et
al. (2023), Coppola et al. (2022), Couillard et al. (2021), Dudel and van
Raalte (2023), Finegood et al. (2021), Himmelstein et al. (2022), Kinge et al.
(2019), Li and Hyndman (2021), Lin et al. (2017), Milligan and Schirle (2021),
Mackenbach et al. (2019), Pitacco (2019), Perez-Salamero et al. (2022),
Sanzenbacher et al. (2019), Shi and Kolk (2022), Simonovitz and Lacko (2023),
Sloan et al. (2010), Strozza (2022) and Woolf et al. (2023).}

And, while a footnote list is not a proper review of the unique contribution
of every paper in the literature, their underlying messages are identical.
Namely, the most crucial empirical takeaway is the existence of an
identifiable group within society that will not live as long as the fortunate
ones. Yet, most national and corporate pension schemes are all pooled together
in one sizeable longevity-risk-sharing fund or pool.

With the motivation and background out of the way, this paper offers a
modelling framework for distributing longevity-risk pools' income and benefits
when participants are heterogeneous. Our central insight is that -- similar to
the nascent literature on decentralized risk sharing -- there are several
equally plausible rules for sharing benefits (a.k.a. \textquotedblleft
skinning the cat\textquotedblright) among survivors. Moreover, the selected
rule depends on the extent of social cohesion within the longevity risk pool,
ranging from solidarity and altruism to pure individualism. The vehicle we
choose to use for analyzing longevity risk pooling and sharing is the
one-period tontine. Our aim is to demonstrate that there are a multiplicity of
feasible arrangements for sharing gains in a one-period model and as we
progress through these models, we will draw highlights and comparisons to the
dilemma of pension equity.

\subsection{Setting the Stage}

In this subsection, we \textquotedblleft set the stage\textquotedblright\ for
our paper's main theoretical contributions by illustrating the multitude of
ways that \emph{in theory} could allocate gains and losses from longevity. A
more formal and very general model -- especially as it relates to the notation
-- will be presented in subsequent sections. For now, imagine the following
situation. A group of $n$ investors (a.k.a. retirees) \emph{pool} together
into the following one-period longevity risk-sharing scheme. They each invest
or allocate $\pi_{j}>0$ dollars at time zero into an account earning a
one-plus risk-free rate: $(1+R)\geq1$, but they face a $p_{j}>0$ probability
of surviving to the end of the period. They decide to share the total fund
among survivors, which is also known as a one-period tontine. For the sake of
a simple numerical example, we will assume $n=3$, $R\geq0$. The first retiree
invests $\pi_{1}=\$80$, the second $\pi_{2}=\$50$ and third $\pi_{3}=\$20$.
The one-period survival probabilities are: $p_{1}=20\%,p_{2}=50\%,p_{3}=80\%$,
which reflect mortality rates over a decade at old, middle and early
retirement ages. The two tables in subsections 1.2 and 1.3 summarize the in-
and outputs for the case in which $R=0$, but the example itself would apply
for any deterministic $R\geq0$.

\begin{table}[th]
\centering
\begin{tabular}
[c]{||c|c|c|c||}\hline
$\pi_{1}$ = & \$80 & $p_{1}$ = & 20\%\\\hline
$\pi_{2}$ = & \$50 & $p_{2}$ = & 50\%\\\hline
$\pi_{3}$ = & \$20 & $p_{3}$ = & 80\%\\\hline\hline
$\sum\, \pi_{k}$ & \$150.00 & \texttt{and} & $R=0\%$\\\hline
\end{tabular}
\end{table}

Clearly, investor $j=1$ has placed the most at risk because he/she faces an
$80\%$ probability of dying and losing it all \underline{and} has invested
$\$80$. Think of the function $g(\pi,p)=\pi/p$ as a theoretical measure of
\textquotedblleft money at risk\textquotedblright, although $g(\pi,p)$ could
be defined as any function that is increasing in its first argument and
decreasing in its second. Regardless of how exactly \textquotedblleft money at
risk\textquotedblright\ is measured, investor $i=3$ risks a mere $\$20$, and
faces an $80\%$ probability of surviving, so their $g(\pi,p)=\pi/p=25$. Note
that the third investor expects to survive since $p_{3}\!>\!0.5$, while the
first investor doesn't. Ergo, and perhaps even for ethical reasons, investor
$j=1$, with a $g(\pi,p)=400$, should be entitled to a larger share of the
gains if he/she happens to (get lucky) and survive. That should be obvious,
the question is how much more. There are: $2^{n}=2^{3}=8$ different scenarios,
the most vexing is the $\omega$ in which everyone dies. Now, one can set the
rules of this game in many ways -- perhaps even by offering refunds to
beneficiaries -- but we will assume that in the scenario in which everyone
dies, which has a $(1-p_{1})(1-p_{2})(1-p_{3})=8\%$ probability (assuming
independence), the $\$150$ is taxed, taken by the Government and lost to
participants. Why should the Government be entitled to a joint life insurance
policy on $n$ heads? Perhaps it's compensation for enforcing the contract in
the other $2^{n}-1=7$ scenarios. Or, it's how things work in the real world
for unclaimed money in bank accounts. Either way, that's our assumption for
$\omega_{(2^{n})}$. There are other $n=3$ scenarios that are trivial, namely
when there is only one survivor who takes the entire: $\sum\pi_{k}=\$150$,
when $R=0$, otherwise the sum is larger. This leaves $2^{n}-n-1=4$ scenarios
in which the fund must be distributed in a non-discriminatory manner. An
informal discussion with colleagues indicates \emph{the lack of any clear
consensus} on exactly how the funds should be distributed in each of those $4$
non-trivial scenarios in a manner that is perceived as
\emph{non-discriminatory}.

Now, we should note that due to the $(8\%)$ probability (assuming
independence) that everyone dies, the expected investment return to the entire
group is strictly less than $R$ when $n<\infty$. In further sections, we will
return to the implications of this fact, as it relates to matters of fairness.
Here we simply explain why. In 7 scenarios, the entire pool shares
$\$150\times(1+R)$, but in the 8th (all dead) one, the pool gets nothing.
Ergo, the pool's expected payout is 92\% times $\$150\times(1+R)$ plus 8\%
times zero, which is $\$138\times(1+R)$, and the pool's expected investment
return is: $\left(  138/150\right)  \times(1+R)-1$, which is strictly less
than $R$. Again, we will return to this matter -- and how one can fix the
expected return so that it isn't less than $R$ -- later on in the paper.

\subsection{One Possible Allocation}

As noted earlier, we sent the above query to a number of specialists (who are
noted and thanked at the end of this paper) and received a variety of replies.
Here we offer one possible way or \textquotedblleft rule\textquotedblright%
\ that can be used to distribute the $\$150$, or more generally the
$\$150(1+R)$, when $R\geq0$, in the four \emph{non-trivial} scenarios. This
solution or allocation should help set the stage for a more general discussion
later on. \begin{table}[th]
\centering
\begin{tabular}
[c]{|c|c||c|c||c|c||c|c||c|c||c|c||c|c||c|c||}\hline
$\omega_{1} $ & $W_{1}$ & $\omega_{2}$ & $W_{2}$ & $\omega_{3}$ & $W_{3}$ &
$\omega_{4}$ & $W_{4}$ & $\omega_{5}$ & $W_{5}$ & $\omega_{6}$ & $W_{6}$ &
$\omega_{7}$ & $W_{7}$ & $\omega_{8}$ & $W_{8}$\\\hline
1 & 114.29 & 0 & 0 & 0 & 0 & 1 & 141.18 & 1 & 120 & 1 & 150 & 0 & 0 & 0 &
0\\\hline
1 & 28.57 & 1 & 120 & 0 & 0 & 0 & 0 & 1 & 30 & 0 & 0 & 1 & 150 & 0 & 0\\\hline
1 & 7.14 & 1 & 30 & 1 & 150 & 1 & 8.82 & 0 & 0 & 0 & 0 & 0 & 0 & 0 &
0\\\hline\hline
8\% & ~ & 32\% & ~ & 32\% & ~ & 8\% & ~ & 2\% & ~ & 2\% & ~ & 8\% & ~ & 8\% &
~\\\hline
\end{tabular}
\end{table}

We note here that although our discussion and methodology is applicable to the
case when $R\geq0$, the numbers displayed in the table are for $R=0$. The
logic for our allocation is as follows. Start with scenario $\omega_{1}$,
where all three investors are alive, an event with 8\% probability. The first
$(\pi_{1}=\$80,p_{1}=0.20)$ investor thinks to him or herself: Had I used
$\pi_{1}$ to purchase \emph{ a pure endowment from an insurance company -- my
payout would have been:} $\pi_{1}/p_{1}\times(1+R)=\$400\times(1+R)$, assuming
a technical interest or valuation rate equal to $R$. This is an insurance
claim, but one in which a limited pool of money is available. The insurance
claim is $g(\pi,p)\times(1+R)$, where $g(\pi,p)$ is the (above noted)
\textquotedblleft money at risk\textquotedblright\ function. Likewise, the
second $(\pi_{2}=\$50,p_{2}=0.50)$ investor is entitled to an insurance claim
of: $\pi_{2}/p_{2}\times(1+R)=\$100\times(1+R)$, and the third investor
claims: $\pi_{3}/p_{3}\times(1+R)=\$25\times(1+R)$, using the same actuarial
logic. In total, for the three survivors in scenario $\omega_{1}$, the
aggregate insurance claim is $\mathbf{C}(\omega_{1}):=\sum_{k=1}^{n}(\pi
_{k}/p_{k})\times(1+R)=\sum_{k=1}^{n}g(\pi_{k},p_{k})\times(1+R)=\$525\times
(1+R)$, but alas there is only $(1+R)\sum_{k=1}^{n}\pi_{k}=(1+R)\times\$150$
available to distribute to the pool.

So, our proposed rule is to give investors the relative fraction, i.e. their
\emph{personal} insurance claim against the \emph{aggregate} insurance claim
of the available funds. The first investor claims $400\times(1+R)$ out of a
total $525\times(1+R)$, which is 76.19\%, or $(0.762)(150)\times
(1+R)=\$114.3\times(1+R)$ from the available $\$150\times(1+R)$. This is
\emph{more} than individual $i=1$ invested, but \emph{less} than his
\emph{personal} insurance claim. Algebraically this investor takes:
$(1+R)\times(\pi_{1}/p_{1})/\mathbf{C}(\omega_{1})$. The same logic gives the
second investor $100\times(1+R)$ out of $525\times(1+R)$, or 19.04\% of the
$\$150\times(1+R)$ available, which is $\$28.57\times(1+R)$, and less than the
$\pi_{2}=\$50$ invested. The third and final investor makes a personal claim
of $25\times(1+R)$ from an aggregate claim of $525\times(1+R)$, a mere 4.76\%
of the available $\$150\times(1+R)$, for a total payout of $\$7.14\times
(1+R)$. The third investor, like the second, walks away with less than
originally invested, while the (relative) winner is investor number one, who
gets more than his original $\pi_{1}=\$80$.

The same logic can be applied to the other three scenarios $\omega_{2}%
,\omega_{4},\omega_{5}$. While the \emph{personal} insurance claim for
$(\pi/p)\times(1+R)$ remains the same for each individual survivor, the
\emph{aggregate} insurance claim paid to the survivors is lower due to the
smaller number of survivors. For example, the value of $\mathbf{C}(\omega
_{2})=125\times(1+R)$, $\mathbf{C}(\omega_{4})=425\times(1+R)$ and
$\mathbf{C}(\omega_{5})=500\times(1+R)$. Again, these are the denominators for
the fractional allocation of end-of-period available funds, where the
numerator is the \emph{personal} claim $(\pi_{k}/p_{k})\times(1+R)$.

In sum, while there are many ways to \emph{skin the tontine cat} our suggested
(general) rule for $W_{(i,j)}$, which represents the payout in scenario
$\omega_{i}$ (column) and individual $j$ (row) in the above table, can be
written as:
\begin{equation}
W_{(i,j)}\;=\;(1+R)\sum_{k=1}^{n}\pi_{k}\times\left(  \frac{(\pi_{j}%
/p_{j})\times I_{(i,j)}}{\sum_{k=1}^{n}(\pi_{k}/p_{k})\times I_{(i,k)}%
}\right)  ,\;\;j=1,\ldots,n\text{ and }i=1,\ldots,2^{n}. \label{rule}%
\end{equation}
where $I_{(i,j)}$ is the (scalar) life status of the $j$'th investor in the
$i$'th scenario. Note (once again) that our numerical example assumed $R=0$,
but nothing stops us from using the same rule or allocation for the case of
$R>0$. Moreover, in the formula above, the variable $(1+R)$ in the nominator
and the denominator cancel each other out. For example, under $\omega_{4}$,
the indicator variables are: $I_{(4,1)}=1$, $I_{(4,2)}=0$, $I_{(4,3)}=1$, and
the denominator is $(80/0.2)\times(1+R)\times1+(50/0.5)\times(1+R)\times
0+(20/0.8)\times(1+R)\times1$, which is the above noted $\mathbf{C}(\omega
_{4})=\$425\times(1+R)$. Again, the quantity in brackets in equation
(\ref{rule}) is the ratio of \emph{personal} to \emph{aggregate} insurance
claim, which is then multiplied by the money available in the pool. And, when
$i=2^{n}$, which is the $\omega$ scenario in which everyone is dead, we define
$W_{(i,j)}:=0$, despite the zero in the denominator. Equation (\ref{rule}),
which will reappear in many guises and incarnations over the paper, is a
\emph{proportional} risk-sharing rule, but other rules will be proposed and
analyzed in due time.

The above $W_{(i,j)}$ formula or equation will be parsed, analyzed and refined
later on in this paper. Still, at this early stage, we should note that the
expression is not defined (and does not make any sense) for the scenario in
which all participants die. This can be easily seen since the denominator will
be zero in this case.

\subsection{The Tontine Fund}

As explained in the above example, a single period tontine fund is an
investment made by a group of people and an administrator. Participants each
invest a certain amount. At the end of the observation period, surviving
participants share the proceeds of the tontine fund, while non-surviving
participants do not receive anything. If no participant survives, the
administrator owns all the proceeds. The tontine fund is self-financing,
meaning that only the available proceeds are distributed, and insolvency or
default is not possible. Therefore, no solvency capital needs to be set up at
the beginning of the contract.

Denuit, Hieber \& Robert (2022) and Denuit \& Robert (2023) studied single
period tontine funds, also known as longevity funds or endowment contingency
funds. Indeed, the literature of tontines as well as group self-annuitzation
schemes more generally is large and continues to grow.\footnote{Key papers in
that literature include, alphabetically listed, Bernhardt \& Donelly (2019),
Bernhardt and Qu (2023), Blake, Boardman \& Cairns (2014), Chen, Chen \& Xu
(2022), Donnelly (2018), Donnelly, Guillien \& Nielsen (2014), Forman and
Sabin (2015), McKeever (2009), Piggott, Valdez \& Detzel (2005), Stamos
(2008), Weinert and Grundel (2021).}

We consider tontine funds where only the surviving participants (or the
administrator in their absence) receive the proceeds. However, it is possible
to distribute the proceeds among participants who meet objective criteria
other than survival, such as a pre-defined health event, hospitalization, etc.
The mathematical description is similar, but we focus only on survival as a
trigger for participants to be entitled to proceeds.

In this paper, we will discuss the scenario where initial investments (wealth)
and survival probabilities (health) vary among participants, which we call a
heterogeneous case. As a special case, we will also examine the situation
where all participants invest the same amount and have the same survival
probabilities, which we refer to as the homogeneous case.

As we noted earlier in this paper, one concern with (re-introducing) tontines
is their actuarial fairness. Previous research has also addressed the problem
we mentioned in our introduction. Namely, a single-period tontine where the
probability that all members die before the end date is positive, is clearly
unfair (mathematically) since the expected return for the group, after
accounting for any investment gains, is less than $R$. To resolve this issue,
some researchers have suggested adding an insurance benefit for beneficiaries
of deceased members. Indeed, most papers on tontines written in the last few
years have all added this element to repair expectations.

While this approach resolves the mathematical problem, this isn't why people
by tontines.\ Indeed, it violates the spirit of the (historical) tontine in
which all rights and ownership benefits are lost at death. This isn't why
people buy tontines. Furthermore, some members may not have any beneficiaries,
leading to yet another unintended redistribution of wealth. In extreme cases,
when there is only one person surviving, this creates a moral hazard. In other
words, and for many reasons, while adding a death benefit refund or payout
\textquotedblleft solves\textquotedblright\ the math, it \textquotedblleft
ruins\textquotedblright\ the elegance of the tontine ideal. Instead, the
approach we will present in this paper is new or at least different from the
recent literature. We introduce a \emph{tontine administrator} as both a
technical and real-world solution to some of these issues, instead of
artificially adding legacy or bequest payouts. The same administrator could be
invoked within when this problem is examined thru the prism of decentralized
risk sharing (DRS), although in that context this \textquotedblleft new
player\textquotedblright\ would serve as a legal enforcer more than a
mechanism for creating actuarial fairness. More on this DRS aspect is
discussed in the appendix.

\emph{Why an administrator?}

The (modern) tontine scheme is designed to eliminate the need for guarantees,
capital, and solvency requirements. However, to ensure that all participants
in the scheme behave appropriately, an \textquotedblleft
authority\textquotedblright\ must monitor and enforce the \textquotedblleft
rules of the game\textquotedblright. This is not just a real-world friction
but a critical aspect of the tontine scheme, as it creates the necessary legal
and administrative confidence that payouts will be shared according to
pre-specified rules. The tontine administrator, who could be a government
agency or regulator, is thus, in our view, a key participant in the scheme and
must be provided with compensation for their services. This compensation is
the extra leftovers noted above, allocated or bequeathed to the tontine
administrator when everyone dies. As we will show, if the administrator
contributes to the initial investments, this approach may make the scheme
actuarially fair and more realistic for implementation.

Indeed, one of the co-authors of this paper was involved in the introduction
of a tontine scheme in Canada and can attest to the fact that participants
were extremely concerned about who would monitor and oversee the tontine,
since the traditional insurance regulators, who demand capital, were absent.
Thus, while a utopian version of longevity risk sharing assumes everyone
behaves appropriately and discloses their true survival probabilities, we
argue that an administrator is required to keep everyone honest.

In sum, this paper introduces a new player into the (modern) tontine
literature, an administrator, and shows how they interact and engage with the
group, as well as whether or not they might be asked to pay (which means they
also contribute to the fund) for the right to administer if indeed they are
going to benefit from the tontine leftovers.

\subsection{A simple example}

Consider two individuals, denoted as person 1 and person 2, who want to
participate in a peculiar game of chance. To enter the game, person 1 pays an
amount of money denoted by $\pi_{1}$, while person 2 pays an amount of money
denoted by $\pi_{2}$. Person 1 tosses a two-sided coin, and person 2 rolls a
six-sided die. In this game, person 1 is successful if they toss heads, while
person 2 is successful if they roll a 1. Person 1 and person 2 are referred to
as "participants" in the game.

In addition to the two participants, a third person is involved, known as the
"administrator". The administrator is also allowed to contribute to the prize
pool by paying an amount of money denoted by $\pi_{3}$. According to
predetermined rules, the administrator is responsible for collecting the money
and distributing it after the coin and die are thrown.

If the coin lands on heads and the die does not land on 1, the total amount of
$\pi_{1}+\pi_{2}+\pi_{3}$ is awarded to person 1. Similarly, if the coin does
not land on heads but the die lands on 1, the total amount of $\pi_{1}+\pi
_{2}+\pi_{3}$ is awarded to person 2. If both participants are successful
(i.e., heads and 1 appear after the respective throws), the total proceeds of
$\pi_{1}+\pi_{2}+\pi_{3}$ are shared by person 1 and person 2 in a
well-defined manner. Finally, if both participants are not successful (i.e.,
neither heads nor 1 appear after their respective throws), the total proceeds
of $\pi_{1}+\pi_{2}+\pi_{3}$ go to the administrator. At first glance, it may
seem unusual that the administrator also contributes an amount of money
$\pi_{3}$ to the prize pool. However, in our example, the probability that the
administrator will receive the entire prize pool is $\frac{5}{12}$ assuming
independence, and thus it seems reasonable that the administrator should also
contribute to the prize pool for this chance of winning.

If at most one of the two participants is successful, the rules for paying out
the total proceeds are clear. However, in this paper, we seek to answer the
following question: What is a reasonable, acceptable, and possible way to
allocate the total proceeds ($\pi_{1}+\pi_{2}+\pi_{3}$) in case both
participants have a successful throw? A uniform allocation where each
participant receives $\frac{\pi_{1}+\pi_{2}+\pi_{3}}{2}$ is often seen as
'unfair' because it does not consider that the chances for success are much
larger for participant $1$ than for participant $2$. To address this, we will
denote the payouts to participants $1$ and $2$ in case of a successful throw
by $\beta_{1}$ and $\beta_{2}$, respectively, with%
\[
\beta_{1}+\beta_{2}=\pi_{1}+\pi_{2}+\pi_{3}.
\]

Introducing the indicator variables $I_{1}$ and $I_{2}$ where $I_{i}$ is 1 for
a successful participant $i$ and 0 otherwise, the payouts $W_{1}$ and $W_{2}$
can be expressed as follows:%

\[
\left(  W_{1},W_{2}\right)  =\left\{
\begin{array}
[c]{cc}%
\left(  \pi_{1}+\pi_{2}+\pi_{3},0\right)  & :\text{if }I_{1}=1\text{ and
}I_{2}=0\\
\left(  0,\pi_{1}+\pi_{2}+\pi_{3}\right)  & :\text{if }I_{1}=0\text{ and
}I_{2}=1\\
\left(  0,0\right)  & :\text{if }I_{1}=0\text{ and }I_{2}=0\\
\left(  \beta_{1},\beta_{2}\right)  & :\text{if }I_{1}=1\text{ and }I_{2}=1
\end{array}
\right.
\]
\qquad

In order to write down the payout $W_{3}$ to the administrator, we introduce
the indicator variable $I_{3}$ which is defined by%

\[
I_{3}=\left(  1-I_{1}\right)  \times\left(  1-I_{2}\right)  .
\]

This indicator variable equals $1$ when neither participant is successful
(i.e., $I_{1}=I_{2}=0$) and $0$ otherwise (i.e., $I_{1}=1$ or $I_{2}=1$). The
administrator's payoff can be expressed as:%

\[
W_{3}=\left\{
\begin{array}
[c]{cc}%
0 & :\text{if }I_{3}=0\\
\pi_{1}+\pi_{2}+\pi_{3} & :\text{if }I_{3}=1
\end{array}
\right.
\]

Before playing a gamble, the two participants and the administrator must agree
on a series of payments represented by $\pi_{1}$, $\pi_{2}$, and $\pi_{3}$, as
well as appropriate values for $\beta_{1}$ and $\beta_{2}$. In this paper, we
will explore such "exotic" gambles or investments and examine the properties
that such investments should have. We will not limit ourselves to the case of
only two participants and one administrator, but instead consider the general
problem of multiple participants and one administrator.

With some of the introductory concepts and notation behind us, the structure
of what follows in this paper is organized as follows. Section 2 models and
discusses the process of allocating tontine share. The subsequent Section 3
examines a multitutde of expressions for the payout of a tontine fund. Then,
Section 4 moves on to matters of actuarial fairness, while Section 5 links the
tontine fund to (classical) pure endowment insurance. Section 6 looks at
internal share allocation schemes and Section 7 concludes the paper. Finally,
an appendix numbered Section 8, flushes out the connection between tontine
funds more generally and decentralized risk sharing rules.

\section{Tontine funds and tontine shares}

Let's consider a scenario where a group of $n$ individuals decide to set up a
one-period tontine fund. These individuals are referred to as 'participants'.
At the beginning of the investment period, each participant $i$ makes an
initial (strictly positive) investment $\pi_{i}$ in the fund. Our objective is
to establish a fair and practical method for the surviving participants to
divide the total investment among themselves if one or more of them survives.
There is also a possibility that all participants may pass away, in which
case, we need to determine what happens to the fund's proceeds. We have an
administrator (party $n+1$) to manage the fund. The administrator's role is to
collect investments at the beginning of the investment period, invest them,
and distribute the proceeds (initial investments and returns) to the surviving
participants. If all participants pass away, the administrator receives the
full proceeds of the fund. The administrator also contributes an initial
(non-negative) investment $\pi_{n+1}$ to the fund to receive these funds in
case of no survivals. To make things simpler, we introduce a vector:%
\[
\boldsymbol{\pi}=\left(  \pi_{1},\pi_{2},\ldots,\pi_{n},\pi_{n+1}\right)  ,
\]
which we will call the \textit{investment vector}.\ 

The sum of all the investments made by the participants and the administrator,
i.e. $\sum_{j=1}^{n+1}\pi_{j}$, equals the total value of the fund at the time
$0$. Each participant invests $\pi_{i}$ to buy shares or units in the fund.
Each participant $i$ who survives until time $1$ will cash in exchange for his
shares. This paper aims to determine a reasonable and acceptable number of
units each participant should receive at time $0$ for their initial investment
of $\pi_{i}$. We will consider both the chance of inheriting part of the
tontine fund and the initial investment amount while answering this question.

Let us denote the (strictly positive) number of shares of the tontine fund
received by participant $i$ by $f_{i}$.\ The vector $\mathbf{f}$ defined by
\[
\mathbf{f}=\left(  f_{1},f_{2},\ldots,f_{n}\right)
\]
will be called the \textit{(tontine) share allocation vector}.

At time $0$, the total number of shares issued is calculated by adding up all
the shares held by participants, represented by $\sum_{j=1}^{n}f_{j}$.
Similarly, the total investment in the fund at that time is calculated by
adding up the contributions of all participants and the administrator,
represented by $\sum_{j=1}^{n+1}\pi_{j}$. It's important to note that the
administrator does not receive any shares, but in case no participant
survives, all proceeds from the fund will belong to the administrator.

We define the time-$0$ value $S(0)$\ of a tontine share as follows:%
\begin{equation}
S(0)=\frac{\sum_{j=1}^{n+1}\pi_{j}}{\sum_{j=1}^{n}f_{j}}\text{.} \label{D7}%
\end{equation}
Notice that in the denominator of (\ref{D7}), we divide by the number of
allocated shares, that is $\sum_{j=1}^{n}f_{j}$.

At an individual level, the participant's initial investment is not
necessarily equal to the time-$0$ value of his allocated tontine shares.
Indeed,
\[
\pi_{i}\neq S(0)\times f_{i}\text{, }\qquad\text{for }i=1,2,\ldots,n.
\]

It's important to understand that the symbol $\neq$ means "not necessarily
equal" here. This is because, in certain situations, two participants with the
same investment $\pi_{i}$ might require different rewards. For example,
suppose the first person has a lower survival probability due to a higher risk
profile (older age). In that case, they might need to be compensated for the
extra risk they're taking by receiving more shares than the second person. In
other situations, giving more to those with higher survival probabilities
could be more appropriate, as they are expected to live longer and will need
more financial support. We'll explore this issue further in this text.

It is also important to note that the shares or units are personalized,
meaning that they are not anonymous. Each unit sold at time $0$ is linked to a
particular individual participant in the fund. Moreover, the allocated shares
of each participant can only be exchanged by him for cash at the end of the
observation period and only if he survives it. If the participant dies during
the observation period, then his units become worthless, and we will say that
his tontine shares 'die' in that case.

The number of 'surviving' shares (i.e. shares of which the owner is still
alive at time $1$) is given by
\begin{equation}
\sum_{j=1}^{n}f_{j}\times I_{j}, \label{E1}%
\end{equation}
where $I_{j}$ stands for the indicator variable (Bernouilli r.v.) which equals
$1$ in case participant $j$ survives and equals $0$ otherwise.\ On the other
hand, the number of shares of which the owner has passed away is given by
\begin{equation}
\sum_{j=1}^{n}f_{j}\times\left(  1-I_{j}\right)  . \label{E2}%
\end{equation}
Notice that (\ref{E1}) and (\ref{E2}) may be equal to $0$ and $\sum_{j=1}%
^{n}f_{j}$, respectively, which will happen in case all participants die.\ 

Apart from the survival indicator variables related to the $n$ participants,
we also introduce an indicator variable $I_{n+1}$, that is related to the
payoff that the administrator will receive. Specifically, $I_{n+1}=1$ if all
participants die and the administrator receives all the fund's proceeds.
Conversely, $I_{n+1}=0$ if at least one participant survives and the
administrator does not receive any proceeds from the fund. Hence,
\begin{equation}
I_{n+1}=%
{\displaystyle\prod\limits_{j=1}^{n}}
\left(  1-I_{j}\right)  . \label{D7a}%
\end{equation}
We have that
\begin{equation}
I_{n+1}=1\Leftrightarrow I_{1}=I_{2}=\ldots=I_{n}=0.
\end{equation}
Hereafter, we will always assume that $0<\Pr\left[  I_{n+1}=0\right]  <1$ or,
equivalently,
\begin{equation}
0<\Pr[I_{n+1}=1]<1\text{.} \label{D35}%
\end{equation}
This assumption means that the probability that all participant die is
strictly positive and also strictly smaller than $1$.

To differentiate the shares owned by participants of the tontine fund from
regular, anonymous shares, we refer to them as 'tontine shares'. These are
individualized shares belonging to a specific person that become worthless in
the event of their death.

At time $1$, the total investment in the tontine fund has grown to
\[
\left(  1+R\right)  \times\left(  \sum_{j=1}^{n+1}\pi_{j}\right)  ,
\]
where $R$ is the return over the observation period.\ We assume that $R$ is
deterministic.\ Notice that we can generalize all coming results to the case
where $R$ is random, by replacing $R$ by $E\left[  R\right]  $ in all
formulas, provided we assume that $R$ and the $I_{i}$ are mutually
independent.\footnote{A separate proof of this can be made available by the
authors.}

As previously discussed, we calculate share allocations in a manner such that
if no participants survive, the administrator receives the entire time - $1$
value of the fund. However, if at least one participant survives, the time-1
value of the fund is distributed among the surviving participants, with each
surviving share having a value of $S(1)$ which is defined by the following
expression:%
\begin{equation}
S(1)=\left(  1+R\right)  \times\frac{\left(  \sum_{j=1}^{n+1}\pi_{j}\right)
}{\sum_{j=1}^{n}f_{j}\times I_{j}},\qquad\text{if }I_{n+1}=0. \label{D5a}%
\end{equation}
In case no participants survive, there are no surviving shares left and hence,
we don't have to define $S(1)$ in that case.\ 

Let us denote the time - $1$ payouts to the participants and the administrator
by $W_{i}$, for $i=1,2,\ldots,n+1$.\ To define these payouts, we have to
consider the cases $I_{n+1}=0$ (i.e.\ at least one participant survives) and
$I_{n+1}=1$ (i.e. not any participant survives).\ We will introduce the
notations $\left(  W_{i}\mid I_{n+1}=0\right)  $\ and $\left(  W_{i}\mid
I_{n+1}=1\right)  \ $to distinguish between these two cases.

The payouts $W_{i}$\ to the participants and the administrator are defined
hereafter.\ Conditional on $I_{n+1}=0$, i.e. at least one participant
survives, we have that the payouts to the participant and the administrator
are given by%
\begin{equation}
\left(  W_{i}\mid I_{n+1}=0\right)  =\left\{
\begin{array}
[c]{cc}%
S(1)\times f_{i}\times I_{i}, & \text{for }i=1,2,\ldots,n,\\
0, & \text{for }i=n+1.
\end{array}
\right.  \label{D5b}%
\end{equation}
Taking into account the expression (\ref{D5a}) of $S(1)$, the conditional
payouts for the participants can be expressed as follows:
\begin{equation}
\left(  W_{i}\mid I_{n+1}=0\right)  =\left(  1+R\right)  \times\left(
\sum_{j=1}^{n+1}\pi_{j}\right)  \times\frac{f_{i}}{\sum_{j=1}^{n}f_{j}\times
I_{j}}\times I_{i},\qquad\text{for }i=1,2,\ldots,n. \label{MD1'}%
\end{equation}
Hence, in case at least one participant survives, the total proceeds of the
fund, that is $\left(  1+R\right)  \times\left(  \sum_{j=1}^{n+1}\pi
_{j}\right)  $, are shared among all surviving participants, where any
survivor receives a part of the total funds, which is proportional to the
number of tontine shares $f_{i}$ which were allocated to him at the set-up
time of the fund.\ In this case, the administrator does not receive any payment.\ 

On the other hand, in case $I_{n+1}=1$, i.e. not any participant survives, the
payouts to all parties involved are defined by
\begin{equation}
\left(  W_{i}\mid I_{n+1}=1\right)  =\left\{
\begin{array}
[c]{cc}%
0, & \text{for }i=1,2,\ldots,n,\\
\left(  1+R\right)  \times\left(  \sum_{j=1}^{n+1}\pi_{j}\right)  , &
\text{for }i=n+1.
\end{array}
\right.  \label{D5c}%
\end{equation}
Hence, in case at not any participant survives, the total proceeds of the fund
are owned by the administrator, while the (heirs of the) participants do not
receive anything.\ 

From (\ref{D5b}) and (\ref{D5c}), we see that only the conditional payouts for
the participants $i=1,2,\ldots,n$, given that $I_{n+1}=0$, depend on the
number of allocated tontine shares.\ In other words, $\left(  W_{n+1}\mid
I_{n+1}=0\right)  $\ and $\left(  W_{i}\mid I_{n+1}=1\right)  $,
$i=1,2,\ldots,n+1$, are independent of the choice on the number of allocated
tontine shares.\ 

Remark that the r.v.'s $\sum_{j=1}^{n}f_{j}\times I_{j}$ and $I_{n+1}$ are
'mutually exclusive', which is a special kind of countermonotonicity,
introduced in the actuarial literature in Dhaene \& Denuit (1999). This means
that $\sum_{j=1}^{n}f_{j}\times I_{j}$ and $I_{n+1}$ are both non-negative,
while the one being strictly positive implies the other being equal to
zero.\ Hence, the realization of $\sum_{j=1}^{n+1}f_{j}\times I_{j}$, where
$f_{n+1}$ is an arbitrarily chosen strictly positive number can never be equal
to $0$. In addition to the above-noted reference, two other relevant actuarial
papers considering 'mutual exclusivity' are Cheung and Lo (2014) and Lauzier,
Lin and Wang (2024).

Taking into account this observation and the expressions (\ref{MD1'}) and
(\ref{D5c}), we can express the payouts $W_{i}$ to the participants and the
administrator in the following way:
\begin{equation}
W_{i}=\left(  1+R\right)  \times\left(  \sum_{j=1}^{n+1}\pi_{j}\right)
\times\frac{f_{i}}{\sum_{j=1}^{n+1}f_{j}\times I_{j}}\times I_{i}%
,\qquad\text{for }i=1,2,\ldots,n+1, \label{MD5}%
\end{equation}
In other words, our proposed rule is to give each surviving investor a
fraction of the available funds, where each survivor's fraction is defined as
the number of his personally appointed tontine shares against the number of
tontine shares that were appointed to all surviving participants. In case no
participants survive, then the administrator receives all available funds.
Notice that any positive value of $f_{n+1}$ is allowed, as the particular
choice does not influence the payouts $W_{i}$ of the participants and the administrator.\ 

It is a straightforward exercise to verify that the payouts to the
participants can also be written as follows:%
\[
W_{i}=\left(  1+R\right)  \times\left(  \sum_{j=1}^{n+1}\pi_{j}\right)
\times\frac{f_{i}}{f_{i}+\sum_{j\neq i}^{n}f_{j}\times I_{j}}\times
I_{i},\qquad\text{for }i=1,2,\ldots,n,
\]
where in \ $\sum_{j\neq i}^{n}f_{j}\times I_{j}$, the sum is taken over all
values $j$ from $1$ to $n$, except for $j=i$. This expression is used in
Denuit \& Robert (2023) in the special case that $\pi_{n+1}=0$ and all $f_{i}$
are equal to $1$.

From (\ref{MD5}), we find that
\begin{equation}
\sum_{j=1}^{n+1}W_{j}=\left(  1+R\right)  \sum_{j=1}^{n+1}\pi_{j}, \label{D12}%
\end{equation}
which means that the sum of all payments to the participants and the
administrator is equal to the total proceeds of the fund.\ Hence, there is no
default risk.\ For obvious reasons, we call this property (\ref{D12}) the
'\textit{self-financing property}' of the tontine fund.

Let us now introduce the notation $\mathbf{I}$ for the random vector
consisting of all the survival indicator variables of the participants:
\[
\mathbf{I}=\left(  I_{1},I_{2},\ldots,I_{n}\right)  .
\]

A tontine fund may be set up if the $n$ participants with survival indicator
vector $\mathbf{I}=\left(  I_{1},I_{2},\ldots,I_{n}\right)  $ and the
administrator agree on the vector of investments $\boldsymbol{\pi}$ and the
share allocation vector $\mathbf{f}$.\ Setting up a tontine fund only requires
a group of participants and an administrator, as well as agreement between
them on the vectors $\boldsymbol{\pi}\ $and $\mathbf{f}$. Stated differently,
the payout vector $\boldsymbol{W}=\left(  W_{1},W_{2},\ldots,W_{n+1}\right)
$, of the single period tontine fund is fully characterized by $\mathbf{I}%
,\boldsymbol{\pi}$ and $\mathbf{f}$. Therefore, we will often identify the
tontine fund with the triplet $\left(  \mathbf{I},\boldsymbol{\pi}%
,\mathbf{f}\right)  $.\ Notice that no probabilities have to be assumed to
make the tontine fund operational.\ There must only be an agreement on the
vectors $\boldsymbol{\pi}$ and $\mathbf{f}$. Of course, typically $\mathbf{f}$
may depend on $\boldsymbol{\pi}$ and eventually also on the different
participants' agreed set of survival probabilities.\ Specific choices of the
share allocation vector $\mathbf{f}$ will be considered hereafter.\ 

\section{Other expressions for the payouts of a tontine fund.}

Taking into account (\ref{D7}) we can rewrite the payouts $W_{i}$ of the
tontine fund $\left(  \mathbf{I},\boldsymbol{\pi},\mathbf{f}\right)  $ defined
in (\ref{MD5}) as follows:%
\begin{equation}
W_{i}=\left(  1+R\right)  \times\left(  S\left(  0\right)  \times\sum
_{j=1}^{n}f_{j}\right)  \times\frac{f_{i}}{\sum_{j=1}^{n+1}f_{j}\times I_{j}%
}\times I_{i},\qquad\text{for }i=1,2,\ldots,n+1.\nonumber
\end{equation}
These expressions for the payouts of the participants and the administrator
can easily be transformed into
\begin{equation}
W_{i}=S\left(  0\right)  \times\left(  1+R\right)  \times f_{i}\times\left(
1+\frac{\sum_{j=1}^{n+1}f_{j}\times\left(  1-I_{j}\right)  -f_{n+1}}%
{\sum_{j=1}^{n+1}f_{j}\times I_{j}}\right)  \times I_{i},\qquad\text{for
}i=1,2,\ldots,n+1. \label{D21'}%
\end{equation}

The expression (\ref{D21'}) of the payout $W_{i}$ to participant $i$ has a
straightforward interpretation: In case participant $i$ survives, then
$I_{n+1}=0$, and he will receive two payments at time $1$.\ The first one is
the time-$1$ value $S\left(  0\right)  \times\left(  1+R\right)  \times f_{i}$
of the tontine shares he was allocated at time $0$, where at time $1$ each
share is valuated by $S\left(  0\right)  \times\left(  1+R\right)  $.\ This
means that the first payment is the time-$1$ value of his allocated shares, in
case they were part of a financial fund with a return $R$.\ In addition, the
shares of the persons who did not survive, that is $\sum_{j=1}^{n}f_{j}%
\times\left(  1-I_{j}\right)  $, are distributed among the survivors, where
each survivor gets a part of it proportional to the shares he was allocated at
time $0$. Hence, person $i$ receives in addition $\sum_{j=1}^{n}f_{j}%
\times\left(  1-I_{j}\right)  \times\frac{f_{i}}{\sum_{j=1}^{n}f_{j}\times
I_{j}}$ shares, where each additional share is also evaluated by $S\left(
0\right)  \times\left(  1+R\right)  $. The value of these extra shares
corresponds to the second payment at time $1$.\ 

Notice that (\ref{D21'}) implies that
\begin{equation}
W_{i}\geq S\left(  0\right)  \times\left(  1+R\right)  \times f_{i}\times
I_{i},\qquad\text{for }i=1,2,\ldots,n, \label{D1}%
\end{equation}
which means that in case participant $i$ survives, he will always receive at
least the time-$1$ value of the tontine shares that were allocated to him at
time $0$, where accumulation is performed with the tontine fund return $R$.
Notice that (\ref{D1}) does not mean that upon survival, the participant
receives at least the accumulated value of his initial investment $\pi_{i}$.
Hence, upon survival,%
\begin{equation}
W_{i}\ngeqq\pi_{i}\times\left(  1+R\right)  \times I_{i},\qquad\text{for
}i=1,2,\ldots,n, \label{D2}%
\end{equation}
where $\ngeqq$ has to be interpreted as 'not necessarily larger than or equal
to'.\ Remark that $W_{i}\geq\pi_{i}\times\left(  1+R\right)  \times I_{i}$
will hold for each participant in case $f_{i}=\pi_{i}$ for all participants
$i$. We will come back to this tontine share allocation in a further section.\ 

For any $i$, we can rewrite formula (\ref{D21'}) as follows:
\begin{equation}
W_{i}=\pi_{i}\times\left(  1+R\right)  \times\left(  1+R_{i}^{\prime}\right)
\times\left(  1+R^{\prime\prime}\right)  \times I_{i},\qquad\text{for
}i=1,2,\ldots,n+1, \label{D2'}%
\end{equation}

with
\[
\left(  1+R_{i}^{\prime}\right)  =\frac{S\left(  0\right)  \times f_{i}}%
{\pi_{i}}%
\]
and
\[
\left(  1+R^{\prime\prime}\right)  =\left(  1+\frac{\sum_{j=1}^{n+1}%
f_{j}\times\left(  1-I_{j}\right)  -f_{n+1}}{\sum_{j=1}^{n+1}f_{j}\times
I_{j}}\right)  .
\]
This means that the return that participant $i$ receives on his initial
investment $\pi_{i}$ upon survival is composed of 3 parts: the investment
return $R$ of the fund, a risk adjustment return $R_{i}^{\prime}$ (because at
time $0$, the investment $\pi_{i}$ is used to buy shares, where the number of
allocated shares to each participant in one way or another reflect his risk
profile), and finally the return $R^{\prime\prime}$ which is caused by the
mortality credits, as the investments of the participants who died are shared
among the surviving participants.\ Notice that $R$ and $R^{\prime\prime}$ are
non-negative and independent of $i$, whereas the risk adjustment return
$R_{i}^{\prime}$ is participant-specific and may be negative. We should point
out that $R$ and $R_{i}^{\prime}$ are deterministic, whereas $R^{\prime\prime
}$ is stochastic.

\begin{remark}
Suppose that $\pi_{1}=\pi_{2}=\ldots=\pi_{n}=\pi$, and also that the
participants are ordered in such a way that
\[
f_{1}\leq f_{2}\leq\ldots\leq f_{n}.
\]
Intuitively, participant $1$ is the one who gets the least amount of shares
(e.g. because he is the youngest, implying that his investment is least at
risk), while participant $n$ is the one who gets the most shares (e.g. because
he is the oldest participant).\ Then we find that
\[
S\left(  0\right)  \times f_{1}=\frac{n\times\pi+\pi_{n+1}}{\sum_{j=1}%
^{n}f_{j}}\times f_{1}\leq\frac{n\times\pi+\pi_{n+1}}{n\times f_{1}}\times
f_{1}=\pi+\frac{\pi_{n+1}}{n}.
\]
This observation leads to
\[
\left(  1+R_{1}^{\prime}\right)  =\frac{S\left(  0\right)  \times f_{1}}{\pi
}\leq1+\frac{1}{n}\frac{\pi_{n+1}}{\pi}.
\]
In case $\pi_{n+1}=0$, the person who is allocated the least amount of tontine
shares person receives a negative adjustment return $R_{1}^{\prime}\leq0$. On
the other hand, one has that
\[
S\left(  0\right)  \times f_{n}=\frac{n\times\pi+\pi_{n+1}}{\sum_{j=1}%
^{n}f_{j}}\times f_{n}\geq\frac{n\times\pi+\pi_{n+1}}{n\times f_{n}}\times
f_{n}=\pi+\frac{\pi_{n+1}}{n}%
\]
and hence,
\[
\left(  1+R_{n}^{\prime}\right)  =\frac{S\left(  0\right)  \times f_{n}}{\pi
}\geq1+\frac{1}{n}\frac{\pi_{n+1}}{\pi}\geq1.
\]
This means that the 'person who gets the most shares' receives a positive risk
adjustment return $R_{n}^{\prime}\geq0$.\ 
\end{remark}

\section{Actuarial fairness of a tontine fund}

Following, Bernard, Feliciangeli \& Vanduffel (2022) and others in the next
definition, we say that a tontine fund is 'actuarially fair' for the
participants in case it is an actuarial fair deal for each participant. This
means that the time $1$ value of each participant's initial investment
$\pi_{i}$ is equal to his expected payoff $E\left[  W_{i}\right]  $ at time
$1$.

\begin{definition}
The tontine fund $\left(  \mathbf{I},\boldsymbol{\pi},\mathbf{f}\right)  $ is
actuarial fair for each of its participants in case%
\begin{equation}
\pi_{i}\times\left(  1+R\right)  =E\left[  W_{i}\right]  ,\qquad\text{for
}i=1,2,\ldots,n\text{.} \label{E6}%
\end{equation}

\end{definition}

We are loath to introduce yet additional notations or generalizations for the
precise magnitude of $R$, in a real-world scenario. But, in terms of
structure, one could differentiate between a technical (valuation) interest
rate, which is used to \emph{discount} expected cash flows, and the
deterministic return of the fund itself, that is, the rate by which the fund
grows. Carrying those two $R$s wouldn't add much to the underlying longevity
risk-sharing insights and would (only) add clutter to the equations.

As the payouts for the participants are $0$ in case not any person survives,
we have the tontine fund $\left(  \mathbf{I},\boldsymbol{\pi},\mathbf{f}%
\right)  $ is actuarially fair in the case
\begin{equation}
\pi_{i}\times\left(  1+R\right)  =E\left[  W_{i}\mid I_{n+1}=0\right]
\times\Pr\left[  I_{n+1}=0\right]  ,\qquad\text{for }i=1,2,\ldots,n.
\label{E6a}%
\end{equation}
Taking into account (\ref{MD1'}), the $n$ fairness conditions (\ref{E6a}) can
be written as follows:%
\begin{equation}
\pi_{i}=\left(  \sum_{j=1}^{n+1}\pi_{j}\right)  \times E\left[  \frac
{f_{i}\times I_{i}}{\sum_{j=1}^{n}f_{j}\times I_{j}}\mid I_{n+1}=0\right]
\times\Pr\left[  I_{n+1}=0\right]  ,\qquad\text{for }i=1,2,\ldots,n.
\label{E7}%
\end{equation}

We leave for future work, or perhaps to an enterprising student, a formal
proof that -- under some appropriate and suitable conditions -- \emph{at
least} one solution $\boldsymbol{\pi}$ exists to the above set of equations.
Also, on the topic of future work, in the event the survival probabilities are
themselves stochastic (or entirely unknown) one could devise an \emph{ex ante}
agreement for sharing the proceeds of the fund, notwithstanding the fact it
might not be \textquotedblleft actuarially fair\textquotedblright.

\begin{theorem}
If the tontine fund $\left(  \mathbf{I},\boldsymbol{\pi},\mathbf{f}\right)  $
is actuarial fair for each of its participants, then for any $\alpha>0$ and
$\beta>0$ also the tontine fund $\left(  \mathbf{I},\alpha\times
\boldsymbol{\pi},\beta\times\mathbf{f}\right)  $ is actuarially fair for all participants.
\end{theorem}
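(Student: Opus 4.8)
The plan is to show directly that the payout random variables of the rescaled fund are simply $\alpha$ times those of the original fund, so that actuarial fairness transfers by linearity of expectation. First I would write down the payout $\widetilde{W}_i$ of the fund $\left(\mathbf{I},\alpha\boldsymbol{\pi},\beta\mathbf{f}\right)$ using the master formula (\ref{MD5}). In the prefactor, $\sum_{j=1}^{n+1}(\alpha\pi_j)=\alpha\sum_{j=1}^{n+1}\pi_j$ pulls out an $\alpha$; in the share term, participant $i$ now holds $\beta f_i$ shares out of a surviving-share total $\sum_{j=1}^{n+1}(\beta f_j)I_j=\beta\sum_{j=1}^{n+1}f_jI_j$, so the two $\beta$'s cancel. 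Hence $\widetilde{W}_i=\alpha W_i$ for every $i=1,\ldots,n+1$ and every realization of $\mathbf{I}$.

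Taking expectations and invoking the assumed fairness (\ref{E6}) of the original fund then gives $E\!\left[\widetilde{W}_i\right]=\alpha E\!\left[W_i\right]=\alpha\,\pi_i(1+R)=(\alpha\pi_i)(1+R)$ for $i=1,\ldots,n$, which is precisely condition (\ref{E6}) for the rescaled fund with investment vector $\alpha\boldsymbol{\pi}$. Equivalently, one can substitute into the characterization (\ref{E7}): the ratio $f_iI_i/\sum_{j=1}^n f_jI_j$ is homogeneous of degree zero in $\mathbf{f}$, so the factor $\beta$ disappears; the probability $\Pr[I_{n+1}=0]$ depends only on $\mathbf{I}$ and is untouched; and the common factor $\alpha$ then sits on both sides of the equation.

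The only point warranting a word is the event $I_{n+1}=1$, where $\sum_{j=1}^{n}f_jI_j=0$: there (\ref{MD5}) is read with the convention $W_i=0$ for $i\le n$ and $W_{n+1}=(1+R)\sum_{j=1}^{n+1}\pi_j$, as justified by the mutual-exclusivity remark preceding (\ref{MD5}) (the harmless strictly positive padding $f_{n+1}$ keeps $\sum_{j=1}^{n+1}f_jI_j$ positive). Whether or not one also rescales $f_{n+1}$ by $\beta$ is immaterial, since $f_{n+1}$ never enters the participants' payouts. I do not expect any genuine obstacle: the statement amounts to the observation that the tontine payout map is homogeneous of degree one in $\boldsymbol{\pi}$ and of degree zero in $\mathbf{f}$, and actuarial fairness — an identity linear in $\boldsymbol{\pi}$ and invariant under rescaling of $\mathbf{f}$ — is preserved under any such common rescaling.
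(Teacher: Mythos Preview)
Your argument is correct and essentially the same as the paper's: the paper simply notes that the fairness conditions (\ref{E7}) are homogeneous of degree one in $\boldsymbol{\pi}$ and of degree zero in $\mathbf{f}$, which is exactly the observation you spell out (first via (\ref{MD5}), then via (\ref{E7})). Your extra care about the event $I_{n+1}=1$ and the role of $f_{n+1}$ is fine and does not deviate from the paper's reasoning.
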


\begin{proof}
The proof follows immediately from the fairness conditions (\ref{E7}).
\end{proof}

The theorem above implies that if a tontine fund $\left(  \mathbf{I}%
,\boldsymbol{\pi},\mathbf{f}\right)  $ is actuarially fair for its
participants, then the tontine fund $\left(  \mathbf{I},\alpha\times
\boldsymbol{\pi},\mathbf{f}\right)  $, where we multiply all the investments
of all participants and the administrator by a uniform factor $\alpha$, is
also actuarially fair for these same participants. In other words, for a given
group of participants with given survival index vector $\mathbf{I}$ and given
tontine share allocation vector $\mathbf{f}$ not depending on $\boldsymbol{\pi
}$, the set of $n$ equations (\ref{E7}) with unknown $\boldsymbol{\pi}$ can
never have a single solution: if $\boldsymbol{\pi}$ is a solution of
(\ref{E7}), then for any $\alpha>0$ also $\mathbf{\alpha}\times\boldsymbol{\pi
}$ is a solution.

\begin{definition}
The tontine fund $\left(  \mathbf{I},\boldsymbol{\pi},\mathbf{f}\right)  $ is
actuarial fair for the administrator in case
\begin{equation}
\pi_{n+1}\times\left(  1+R\right)  =E\left[  W_{n+1}\right]  . \label{E7a}%
\end{equation}

\end{definition}

Taking into account that the payout to the administrator is $0$ in case at
least one participant survives, we find that the tontine fund is actuarially
fair for the administrator in case
\[
\pi_{n+1}\times\left(  1+R\right)  =E\left[  W_{n+1}\mid I_{n+1}=1\right]
\times\Pr\left[  I_{n+1}=1\right]  ,
\]
or equivalently, taking into account (\ref{D5c}), the tontine fund is
actuarial fair for the administrator if and only if%
\begin{equation}
\pi_{n+1}=\left(  \sum_{j=1}^{n}\pi_{j}\right)  \times\frac{\Pr\left[
I_{n+1}=1\right]  }{\Pr\left[  I_{n+1}=0\right]  }. \label{E8}%
\end{equation}

Notice that in case the number of participants $n$ is large, we will typically
have that the probability that at least one participant survives, i.e.
$\Pr\left[  I_{n+1}=0\right]  $, will be close to $1$. That means that in this
case, we will have that
\[
\pi_{n+1}\approx0.
\]

In the special case that all $I_{i}$ are i.i.d. with $\Pr\left[
I_{i}=0\right]  =q$, we have that $\Pr\left[  I_{n+1}=1\right]  =q^{n}$ and
$\Pr\left[  I_{n+1}=0\right]  =1-q^{n}$, and (\ref{E8}) transforms into
\[
\pi_{n+1}=\left(  \sum_{j=1}^{n}\pi_{j}\right)  \times\frac{q^{n}}{1-q^{n}}.
\]

A question that we will consider in the following theorem is whether a tontine
fund which is fair for all participants is also fair for the administrator.\ 

\begin{theorem}
A tontine fund $\left(  \mathbf{I},\boldsymbol{\pi},\mathbf{f}\right)  $ that
is actuarial fair for each of its participants is also actuarial fair for the
administrator, i.e. the conditions (\ref{E6}) imply that $\pi_{n+1}$ is given
by (\ref{E8}).
\end{theorem}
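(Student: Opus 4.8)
The plan is to exploit the self-financing property (\ref{D12}) together with the $n$ participant fairness conditions (\ref{E6}), reducing everything to a one-line bookkeeping identity. First I would take expectations on both sides of (\ref{D12}); since $R$ is deterministic this gives $\sum_{j=1}^{n+1} E[W_j] = (1+R)\sum_{j=1}^{n+1}\pi_j$. Next I would substitute the fairness conditions $E[W_j] = (1+R)\pi_j$ for $j = 1,\ldots,n$ into the left-hand side and cancel the common term $(1+R)\sum_{j=1}^{n}\pi_j$ from both sides; what survives is precisely $E[W_{n+1}] = (1+R)\pi_{n+1}$, i.e.\ the administrator's fairness condition (\ref{E7a}).

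It then remains to translate (\ref{E7a}) into the explicit formula (\ref{E8}), which is just the equivalence established immediately before the theorem. For completeness I would plug in the payoff structure (\ref{D5c}): $W_{n+1}$ equals $(1+R)\sum_{j=1}^{n+1}\pi_j$ on the event $\{I_{n+1}=1\}$ and $0$ otherwise, so $E[W_{n+1}] = (1+R)\bigl(\sum_{j=1}^{n+1}\pi_j\bigr)\Pr[I_{n+1}=1]$. Equating this with $(1+R)\pi_{n+1}$, writing $\sum_{j=1}^{n+1}\pi_j = \sum_{j=1}^{n}\pi_j + \pi_{n+1}$, and solving the resulting linear equation for $\pi_{n+1}$ yields $\pi_{n+1} = \bigl(\sum_{j=1}^{n}\pi_j\bigr)\Pr[I_{n+1}=1]\big/\bigl(1-\Pr[I_{n+1}=1]\bigr)$, which is exactly (\ref{E8}) once we recall that $1 - \Pr[I_{n+1}=1] = \Pr[I_{n+1}=0]$.

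I do not anticipate any genuine obstacle here: the argument is a direct consequence of the two ingredients and never manipulates the share allocation vector $\mathbf{f}$, consistent with the earlier observation that $W_{n+1}$ is independent of the choice of allocated tontine shares. The only point worth flagging is that dividing by $\Pr[I_{n+1}=0]$ in the last step is legitimate precisely because of the standing assumption (\ref{D35}) that $0 < \Pr[I_{n+1}=1] < 1$; without it the right-hand side of (\ref{E8}) would not even be well-posed.
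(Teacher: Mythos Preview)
Your proof is correct and follows essentially the same summation argument as the paper: both sum the $n$ participant fairness conditions and observe that what remains forces the administrator's fairness condition (\ref{E8}). The only cosmetic difference is that the paper sums the conditional form (\ref{E7}) directly (using that the conditional expectations add to $1$ to obtain $\sum_{j=1}^{n}\pi_j = \bigl(\sum_{j=1}^{n+1}\pi_j\bigr)\Pr[I_{n+1}=0]$), whereas you take expectations in the self-financing identity (\ref{D12}) first and then cancel; the paper itself notes this alternative route in the remark following Theorem~4.
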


\begin{proof}
Suppose that the tontine fund is actuarial fair for each participant.\ This
means that the conditions (\ref{E7}) hold for all participants.\ Summing these
$n$ actuarial fairness conditions,
\[
\sum_{j=1}^{n}\pi_{j}=\left(  \sum_{j=1}^{n+1}\pi_{j}\right)  \times\Pr\left[
I_{n+1}=0\right]  ,
\]
implies that $\pi_{n+1}$ is given by (\ref{E8}), and hence, the tontine fund
is actuarial fair for the administrator.
\end{proof}

From the theorem above, we conclude that a necessary condition for a tontine
fund to be actuarially fair for each of its participants is that it is
actuarial fair for the administrator.\ In other words, in case a tontine fund
is not actuarial fair for its administrator, it cannot be actuarial fair for
all its participants.\ 

In the literature, usually the investment $\pi_{n+1}$ of the administrator is
set equal to $0$, which means that the tontine fund is not actuarial fair for
the administrator, which in turn implies that it can also not be actuarial
fair for all its participants. This observation can also be seen as
follows.\ In case $\pi_{n+1}=0$, we find from (\ref{MD5}) that
\begin{align*}
\sum_{j=1}^{n}E\left[  W_{j}\right]   &  =\sum_{j=1}^{n}E\left[  W_{j}\mid
I_{n+1}=0\right]  \times\Pr\left[  I_{n+1}=0\right] \\
&  =\left(  1+R\right)  \left(  \sum_{j=1}^{n}\pi_{j}\right)  \times\Pr\left[
I_{n+1}=0\right] \\
&  <\left(  1+R\right)  \left(  \sum_{j=1}^{n}\pi_{j}\right)
\end{align*}
This inequality implies that it is impossible that the tontine fund is
actuarial fair for each participant, i.e. $E\left[  W_{i}\right]  =\left(
1+R\right)  \times\pi_{i}$ for all $i$, and\ for at least one participant $i$,
one must have that $E\left[  W_{i}\right]  <\pi_{i}\times\left(  1+R\right)
$.\ Milevsky and Salisbury (2016) call such rules 'equitable' in the sense
that no specific or identifiable member is disadvantaged in time-zero (a.k.a.
initial) expectations.

More specifically, in that paper they investigate how to construct a multi-age
tontine scheme and \textquotedblleft determine the proper share prices to
charge participants so that it is equitable and doesn't discriminate against
any age or any group.\textquotedblright\ The tontine they propose is a closed
pool that does not allow anyone to enter or exit after the initial set-up. To
quote from Milevsky \& Salisbury (2016):

\begin{quote}
\textquotedblleft...By the word fair, we mean that the expected present value
of income will always be less than the amount contributed or invested into the
tontine. However, a heterogeneous tontine scheme can often (though not always)
be made equitable by ensuring that the present value of income (although less
than the amount contributed) is the same for all participants in the scheme
regardless of age. This scheme will not discriminate against any one cohort
although it won't be fair...\textquotedblright
\end{quote}

We should note that they (too) discuss the challenges in designing
longevity-risk sharing rules that work for small groups, and they conclude:

\begin{quote}
``...We have proved that it is possible to mix cohorts without discriminating
provided the diversity of the pool satisfies certain dispersion conditions and
we propose a specific design that appears to work well in practice...''
\end{quote}

Once again, this is consistent with the main tenor of this paper, that there
are an assortment or multitude of methods in which longevity risk can be
shared -- the many ways to \emph{skin a cat} -- and that \emph{a priori} one
isn't necessarily better or worse than the other.

\begin{theorem}
The tontine fund $\left(  \mathbf{I},\boldsymbol{\pi},\mathbf{f}\right)  $ is
actuarial fair for each participant if and only if the following conditions
are satisfied:%
\begin{equation}
\frac{\pi_{i}}{\pi_{n+1}}=E\left[  \frac{f_{i}\times I_{i}}{\sum_{j=1}%
^{n}f_{j}\times I_{j}}\mid I_{n+1}=0\right]  \times\frac{\Pr\left[
I_{n+1}=0\right]  }{\Pr\left[  I_{n+1}=1\right]  },\qquad\text{for
}i=1,2,\ldots,n. \label{F2}%
\end{equation}

\end{theorem}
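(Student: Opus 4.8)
The plan is to deduce the equivalence from the two facts already in hand: the reformulation (\ref{E7}) of actuarial fairness for the participants, and the fact (established just above, by summing the $n$ conditions (\ref{E7})) that fairness for all participants forces $\pi_{n+1}$ to take the value in (\ref{E8}). The bridge between the two displayed conditions will be the single algebraic identity
\[
\sum_{j=1}^{n+1}\pi_j \;=\; \frac{\pi_{n+1}}{\Pr[I_{n+1}=1]},
\]
which is nothing but (\ref{E8}) rewritten by adding $\pi_{n+1}$ to both sides and using $\Pr[I_{n+1}=0]+\Pr[I_{n+1}=1]=1$. Once this identity is available, both implications reduce to substituting it into (\ref{E7}) or (\ref{F2}) and cancelling.

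\textbf{Forward direction.} Assuming the fund is actuarially fair for each participant, condition (\ref{E7}) holds, and the preceding theorem (fairness for all participants implies fairness for the administrator) gives that $\pi_{n+1}$ equals the expression in (\ref{E8}); in particular $\pi_{n+1}>0$, since (\ref{D35}) guarantees $\Pr[I_{n+1}=1]>0$. Hence the displayed identity above holds. Plugging $\sum_{j=1}^{n+1}\pi_j=\pi_{n+1}/\Pr[I_{n+1}=1]$ into (\ref{E7}) and dividing both sides by $\pi_{n+1}$ yields exactly (\ref{F2}) for each $i=1,\ldots,n$.

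\textbf{Converse direction.} Assuming (\ref{F2}) for all $i=1,\ldots,n$, I would first recover (\ref{E8}) by summing over $i$. The key observation is that, conditional on $I_{n+1}=0$, at least one participant survives, so $\sum_{j=1}^{n}f_j I_j>0$ and the conditional random variables satisfy $\sum_{i=1}^{n} f_i I_i/\sum_{j=1}^{n}f_j I_j \equiv 1$; therefore $\sum_{i=1}^{n} E\big[f_iI_i/\sum_j f_jI_j \mid I_{n+1}=0\big]=1$. Summing (\ref{F2}) over $i$ thus gives $\big(\sum_{i=1}^n\pi_i\big)/\pi_{n+1}=\Pr[I_{n+1}=0]/\Pr[I_{n+1}=1]$, i.e. (\ref{E8}), which again delivers the identity $\sum_{j=1}^{n+1}\pi_j=\pi_{n+1}/\Pr[I_{n+1}=1]$. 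Substituting this back into (\ref{F2}) reproduces (\ref{E7}), which is equivalent to actuarial fairness (\ref{E6}) for every participant.

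There is no genuine obstacle here; the proof is two substitutions plus the ``conditional shares sum to one'' identity. The only point requiring a word of care is that the ratio $\pi_i/\pi_{n+1}$ in (\ref{F2}) tacitly presumes $\pi_{n+1}>0$, so I would remark that, under (\ref{D35}), actuarial fairness for all participants is possible only when the administrator makes a strictly positive contribution --- which is precisely the content of (\ref{E8}).
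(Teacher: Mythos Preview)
Your proof is correct and follows essentially the same route as the paper: in both directions you combine the participant fairness conditions (\ref{E7}) with the administrator fairness identity (\ref{E8}), using the algebraic equivalence $\sum_{j=1}^{n+1}\pi_j=\pi_{n+1}/\Pr[I_{n+1}=1]$ (the paper writes this as the factor $\Pr[I_{n+1}=0]/\Pr[I_{n+1}=1]+1$). Your write-up is in fact a bit more explicit than the paper's about why summing (\ref{F2}) recovers (\ref{E8}), via the ``conditional shares sum to one'' observation.
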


\begin{proof}
(a) Let us first assume that the tontine fund $\left(  \mathbf{I}%
,\boldsymbol{\pi},\mathbf{f}\right)  $\ is actuarial fair for each
participant. Then we have from Theorem 2 that the tontine fund is also
actuarial fair for the administrator and his investment\ $\pi_{n+1}$ follows
from (\ref{E8}). The actuarial fairness for the participants means that
(\ref{E7}) holds for any $i=1,2,\ldots,n$.\ These $n$ expressions lead to the
stated expressions (\ref{F2}) for the participant's investments. (b) Next, we
assume the conditions (\ref{F2}) are satisfied.\ Summing these $n$ equations
leads to (\ref{E8}), which is the actuarial fairness condition for the
administrator. The conditions (\ref{F2}) can be rewritten as
\[
\frac{\pi_{i}}{\pi_{n+1}}=\left(  \frac{\Pr\left[  I_{n+1}=0\right]  }%
{\Pr\left[  I_{n+1}=1\right]  }+1\right)  \times E\left[  \frac{f_{i}\times
I_{i}}{\sum_{j=1}^{n}f_{j}\times I_{j}}\mid I_{n+1}=0\right]  \times\Pr\left[
I_{n+1}=0\right]  .
\]
Taking into account the expression (\ref{E8}) for $\pi_{n+1}$ leads to the
actuarial fairness conditions (\ref{E7}) for the participants.
\end{proof}

Those looking for an application to the above theorem might consider the
following. If a group wanted to construct a tontine fund or scheme that was
actuarially fair, the order of operations would start by choosing the
administrator's investment $\pi_{n+1}$ and then the share allocation vector
$\mathbf{f}$. The individual investments -- again, so that the scheme is
actuarially fair, would follow from (\ref{F2}). Of course, in practice, this
order is often reversed when the investment vector $\boldsymbol{\pi}$ is
chosen first, and the share allocation vector $\mathbf{f}$ is an afterthought,
depending on the choice of $\mathbf{\pi}$.

From Theorem 2, we know that if a tontine fund is actuarially fair for each of
its participants, then it is also fair for the administrator.\ However, the
opposite implication does not hold: Actuarial fairness for the administrator
is not sufficient to have actuarial fairness for each of its participants. Let
us now introduce a weaker form of actuarial fairness, which we baptize
\textquotedblleft collective actuarial fairness\textquotedblright, equally
described as \textquotedblleft socially just\textquotedblright\ to avoid the
overused and rather loaded term, fair.

\begin{definition}
The tontine fund $\left(  \mathbf{I},\boldsymbol{\pi},\mathbf{f}\right)  $ is
collective actuarial fair for its participants in case%
\begin{equation}
\left(  \sum_{j=1}^{n}\pi_{j}\right)  \times\left(  1+R\right)  =E\left[
\sum_{j=1}^{n}W_{j}\right]  \text{.} \label{E8a}%
\end{equation}

\end{definition}

Collective actuarial fairness (a.k.a. socially just) means that the time $1$
value of the sum of all participant's initial investments $\sum_{j=1}^{n}%
\pi_{j}$ is equal to the expected value of the sum of all their payoffs
$\sum_{j=1}^{n}W_{j}$ at time $1$.\ 

\begin{theorem}
A tontine fund $\left(  \mathbf{I},\boldsymbol{\pi},\mathbf{f}\right)  $ is
collective actuarial fair for its participants if and only if it is actuarial
fair for the administrator, i.e. the conditions (\ref{E8a}) and (\ref{E8}) are equivalent.
\end{theorem}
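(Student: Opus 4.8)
The plan is to reduce both (\ref{E8a}) and (\ref{E8}) to a single algebraic identity relating $\sum_{j=1}^{n}\pi_{j}$, $\pi_{n+1}$ and $\Pr[I_{n+1}=0]$, using nothing beyond the self-financing property (\ref{D12}) and the explicit form (\ref{D5c}) of the administrator's payout. The whole statement is really just bookkeeping on expectations, so no deep machinery is needed.

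First I would compute $E[W_{n+1}]$. Since $(W_{n+1}\mid I_{n+1}=1)=(1+R)\sum_{j=1}^{n+1}\pi_{j}$ and $(W_{n+1}\mid I_{n+1}=0)=0$ by (\ref{D5c}), we have $W_{n+1}=(1+R)\bigl(\sum_{j=1}^{n+1}\pi_{j}\bigr)I_{n+1}$, hence $E[W_{n+1}]=(1+R)\bigl(\sum_{j=1}^{n+1}\pi_{j}\bigr)\Pr[I_{n+1}=1]$. Next, taking expectations in the self-financing identity (\ref{D12}) gives $\sum_{j=1}^{n}E[W_{j}]=(1+R)\sum_{j=1}^{n+1}\pi_{j}-E[W_{n+1}]=(1+R)\bigl(\sum_{j=1}^{n+1}\pi_{j}\bigr)\Pr[I_{n+1}=0]$, where I have used $\Pr[I_{n+1}=1]+\Pr[I_{n+1}=0]=1$. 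Substituting this expression into the collective-fairness condition (\ref{E8a}) and cancelling the common nonzero factor $1+R$ yields $\sum_{j=1}^{n}\pi_{j}=\bigl(\sum_{j=1}^{n}\pi_{j}+\pi_{n+1}\bigr)\Pr[I_{n+1}=0]$, i.e. $\bigl(\sum_{j=1}^{n}\pi_{j}\bigr)\Pr[I_{n+1}=1]=\pi_{n+1}\Pr[I_{n+1}=0]$; dividing by $\Pr[I_{n+1}=0]$ gives exactly (\ref{E8}). Conversely, each of these manipulations is an equivalence, so reading the chain backwards shows that (\ref{E8}) implies (\ref{E8a}).

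The argument is entirely elementary, so there is no real obstacle; the only point that requires care is that every division performed is legitimate, and this is guaranteed by the standing assumptions $1+R\geq 1>0$ and $0<\Pr[I_{n+1}=0]<1$ recorded in (\ref{D35}). In fact the equivalence would break down if the pool were allowed to satisfy $\Pr[I_{n+1}=0]=0$ (everyone surely dies) or $\Pr[I_{n+1}=0]=1$ (someone surely survives), so the hypothesis (\ref{D35}) is precisely what makes the statement meaningful.
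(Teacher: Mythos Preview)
Your proof is correct and essentially matches the paper's argument. The paper conditions on $\{I_{n+1}=0\}$ to obtain $E\bigl[\sum_{j=1}^{n}W_{j}\bigr]=(1+R)\bigl(\sum_{j=1}^{n+1}\pi_{j}\bigr)\Pr[I_{n+1}=0]$ and then reduces (\ref{E8a}) to (\ref{E8}); it also remarks immediately afterward that the same conclusion follows directly from the self-financing identity (\ref{D12}), which is precisely the route you take.
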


\begin{proof}
In the general case, where $\pi_{n+1}\geq0$, the expected value of the total
payouts to all participants is given by
\begin{align}
E\left[  \sum_{j=1}^{n}W_{j}\right]   &  =E\left[  \sum_{j=1}^{n}W_{j}\mid
I_{n+1}=0\right]  \times\Pr\left[  I_{n+1}=0\right] \nonumber\\
&  =\left(  1+R\right)  \times\left(  \sum_{j=1}^{n+1}\pi_{j}\right)
\times\Pr\left[  I_{n+1}=0\right]  . \label{E5}%
\end{align}
This means that the condition (\ref{E8a}) for collective actuarial fairness
can be rewritten as follows:%
\[
\left(  \sum_{j=1}^{n}\pi_{j}\right)  =\left(  \sum_{j=1}^{n+1}\pi_{j}\right)
\times\Pr\left[  I_{n+1}=0\right]  ,
\]
which is equivalent with the condition (\ref{E8}) of actuarial fairness for
the administrator.
\end{proof}

Remark that the proof of the previous Theorem also follows directly from the
self-financing property (\ref{D12}). Indeed, this property implies that
\[
E\left[  \sum_{j=1}^{n+1}W_{j}\right]  =\left(  1+R\right)  \sum_{j=1}%
^{n+1}\pi_{j},
\]
which immediately leads to the proof of the Theorem.\ 

In the following theorem, we consider the situation where the participants are
indistinguishable in the sense that the random vector $\left(  I_{1}%
,I_{2},\ldots,I_{n}\right)  $ is exchangeable.\ A special case of the
exchangeability assumption is that all $I_{i}$ are i.i.d.

\begin{theorem}
Consider the tontine fund denoted by $\left(  \mathbf{I},\boldsymbol{\pi
},\mathbf{f}\right)  $.\ Suppose that the indicator vector $\mathbf{I}=\left(
I_{1},I_{2},\ldots,I_{n}\right)  $\ is exchangeable and assume that the fund
applies a uniform tontine share allocation vector $\mathbf{f}=\left(
f,f,\ldots,f\right)  $. Then we have that the tontine fund is actuarial fair
for any of its participants if and only if the following condition is
satisfied: All participants pay the same initial investment, that is $\pi
_{i}=\pi$, for $i=1,2,\ldots,n$, with:
\begin{equation}
\pi=\frac{\pi_{n+1}}{n}\times\frac{\Pr\left[  I_{n+1}=0\right]  }{\Pr\left[
I_{n+1}=1\right]  } \label{G5}%
\end{equation}

\end{theorem}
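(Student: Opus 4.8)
The plan is to apply the characterisation of participant-fairness already established in Theorem 3, namely conditions (\ref{F2}), to the special structure at hand, and then simply read off what it says about the investments. Since (\ref{F2}) is an ``if and only if'' statement, both directions of the present theorem will fall out of a single computation.

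First I would substitute the uniform share vector $\mathbf{f}=(f,f,\ldots,f)$ into (\ref{F2}). The common factor $f$ cancels between numerator and denominator, so the right-hand side of (\ref{F2}) for participant $i$ becomes
\[
E\left[\frac{I_i}{\sum_{j=1}^{n}I_j}\;\Big|\;I_{n+1}=0\right]\times\frac{\Pr[I_{n+1}=0]}{\Pr[I_{n+1}=1]},
\]
where conditioning on $\{I_{n+1}=0\}$ means at least one participant survives, so $\sum_{j=1}^{n}I_j\ge 1$ and the expectation is well defined. The key step is to evaluate this conditional expectation. On the event $\{I_{n+1}=0\}$ one has the deterministic identity $\sum_{i=1}^{n} I_i/\sum_{j=1}^{n}I_j=1$, so the $n$ conditional expectations sum to $1$; and because $\mathbf{I}$ is exchangeable — which in particular makes the conditioning event symmetric in $I_1,\ldots,I_n$ (it is the event that not all of the $I_j$ vanish) — these $n$ expectations are all equal. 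Hence each of them equals $1/n$.

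Substituting this value, condition (\ref{F2}) becomes
\[
\frac{\pi_i}{\pi_{n+1}}=\frac1n\times\frac{\Pr[I_{n+1}=0]}{\Pr[I_{n+1}=1]},\qquad i=1,\ldots,n,
\]
whose right-hand side does not depend on $i$. By Theorem 3 the fund is actuarially fair for each participant if and only if this system holds, and the system holds if and only if all the $\pi_i$ are equal to the single value $\pi=\frac{\pi_{n+1}}{n}\cdot\frac{\Pr[I_{n+1}=0]}{\Pr[I_{n+1}=1]}$ of (\ref{G5}). Note that the standing assumption (\ref{D35}), $0<\Pr[I_{n+1}=1]<1$, is exactly what makes the ratio $\Pr[I_{n+1}=0]/\Pr[I_{n+1}=1]$ finite and positive, and it forces $\pi_{n+1}>0$ whenever such a fair fund exists, consistent with the strict positivity of the participants' investments.

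I do not foresee a genuine difficulty here. The only step that merits a careful sentence rather than a wave of the hand is the evaluation $E\big[I_i/\sum_j I_j\mid I_{n+1}=0\big]=1/n$: it rests simultaneously on the normalisation identity on the conditioning event and on exchangeability, and it is precisely at this point that both hypotheses of the theorem — a uniform share allocation and an exchangeable survival vector — are used. If either hypothesis were dropped, the $n$ right-hand sides of (\ref{F2}) would in general differ and equal investments would no longer be forced.
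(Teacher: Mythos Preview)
Your proposal is correct and follows essentially the same route as the paper: invoke the equivalence in (\ref{F2}), cancel the common factor $f$, use exchangeability plus the normalisation identity on $\{I_{n+1}=0\}$ to obtain $E[I_i/\sum_j I_j\mid I_{n+1}=0]=1/n$, and read off the equal-investment condition (\ref{G5}). Your additional remarks on the role of (\ref{D35}) and on where each hypothesis is used are accurate and add clarity without departing from the paper's argument.
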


\begin{proof}
Consider the tontine fund $\left(  \mathbf{I},\boldsymbol{\pi},\mathbf{f}%
\right)  $, where $\mathbf{I}=\left(  I_{1},I_{2},\ldots,I_{n}\right)  $\ is
exchangeable and $\mathbf{f}=\left(  f,f,\ldots,f\right)  $. From (\ref{F2}),
we know that the tontine fund is actuarial fair for any of its participants if
and only if
\begin{equation}
\pi_{i}=\pi_{n+1}\times E\left[  \frac{I_{i}}{\sum_{j=1}^{n}I_{j}}\mid
I_{n+1}=0\right]  \times\frac{\Pr\left[  I_{n+1}=0\right]  }{\Pr\left[
I_{n+1}=1\right]  },\qquad\text{for }i=1,2,\ldots,n. \label{G1a}%
\end{equation}
Taking into account the exchangeability of $\left(  I_{1},I_{2},\ldots
,I_{n}\right)  $, a symmetry argument leads to the conclusion that $E\left[
\frac{I_{i}}{\sum_{j=1}^{n}I_{j}}\mid I_{n+1}=0\right]  $ is equal for all
$i$.\ Further, as
\[
\sum_{i=1}^{n}E\left[  \frac{I_{i}}{\sum_{j=1}^{n}I_{j}}\mid I_{n+1}=0\right]
=1,
\]
we find that
\[
E\left[  \frac{I_{i}}{\sum_{j=1}^{n}I_{j}}\mid I_{n+1}=0\right]  =\frac{1}%
{n},\qquad\text{for }i=1,2,\ldots,n.
\]
We can conclude that the $n$ actuarial fairness conditions for the
participants are equivalent with $\pi_{i}=\pi$, for $i=1,2,\ldots,n$, where
$\pi$ satisfies (\ref{G5}).
\end{proof}

\section{Single period tontine vs. classical pure endowment}

Consider $n$ persons with survival indicator vector $\mathbf{I}$, who want to
set up a one-period tontine fund and start negotiations about how much
everyone should invest and how the tontine shares should be allocated.\ To
come up with a reasonable tontine fund structure characterized by $\left(
\mathbf{I},\boldsymbol{\pi},\mathbf{f}\right)  $, they start by deciding on
the vector $\boldsymbol{\pi}$.\ Once this vector is specified, the
participants observe the insurance market to find out what kind of pure
endowment insurance each could buy for a premium equal to his tontine fund
investment.\ Suppose that person $i$ can buy a pure endowment with survival
benefit $L_{i}$ for a premium equal to $\pi_{i}$. We do not require any
particular premium principle to determine the $\pi_{i}$. In other words, we
assume the $\pi_{i}$ to be chosen and the corresponding $L_{i}$ to be observed
in the market.\ 

In case the $n$ persons buy the insurance from a particular insurer, this
insurer faces a possibility of insolvency, that is a possibility that the
event
\[
\sum_{j=1}^{n}L_{j}\times I_{j}-\left(  1+R\right)  \times\sum_{j=1}^{n}%
\pi_{j}>0\text{ }%
\]
might occur.\ 

In traditional life insurance, the insurer \textquotedblright
solves\textquotedblright\ the insolvency issue by charging sufficiently high
premiums and setting up a solvency capital.\ 

To solve this issue for the tontine fund under construction, the $n$ persons
appoint an external administrator, who is assumed to contribute $\pi_{n+1}%
\geq0$.\ \ As before, we introduce the Bernoulli r.v., defined as follows:%

\begin{equation}
I_{n+1} = \prod_{j=1}^{n} \left(  1-I_{j} \right)  .
\end{equation}

Furthermore, let $L_{n+1}$ be an arbitrarily chosen strictly positive number.
Then, for each participant, the 'insurance payout' $L_{i}\times I_{i}$ is
replaced by the 'tontine payout'
\begin{equation}
W_{i}=\alpha\left(  \mathbf{I}\right)  \times L_{i}\times I_{i},\qquad
\text{for }i=1,2,\ldots,n+1, \label{D7d}%
\end{equation}
where $\alpha\left(  \mathbf{I}\right)  $ follows from
\[
\alpha\left(  \mathbf{I}\right)  \times\left(  \sum_{j=1}^{n+1}L_{j}\times
I_{j}\right)  -\left(  1+R\right)  \times\left(  \sum_{j=1}^{n+1}\pi
_{j}\right)  =0,
\]
or, equivalently,
\begin{equation}
\alpha\left(  \mathbf{I}\right)  =\left(  1+R\right)  \times\frac{\sum
_{j=1}^{n+1}\pi_{j}}{\sum_{j=1}^{n+1}L_{j}\times I_{j}}. \label{D7c}%
\end{equation}

Hence, the random coefficient $\alpha\left(  \mathbf{I}\right)  $ is chosen
such that the benefits $\alpha\left(  \mathbf{I}\right)  \times L_{i}\times
I_{i}$ satisfy the full allocation condition.

Notice that $\alpha\left(  \mathbf{I}\right)  $ is identical for any
particular participant and the administrator, but it is only observable at
time $1$. It is straightforward to verify that the particular choice of
$L_{n+1}$ does not influence the payouts $W_{i}$.

Furthermore, from (\ref{D7c}) and (\ref{D7d}) , we find that
\begin{equation}
W_{i}=\left(  1+R\right)  \times\left(  \sum_{j=1}^{n+1}\pi_{j}\right)
\times\frac{L_{i}}{\sum_{j=1}^{n+1}L_{j}\times I_{j}}\times I_{i}%
,\qquad\text{for }i=1,2,\ldots,n+1. \label{T3}%
\end{equation}

We find that these payouts are exactly equal to the payouts of the tontine
fund $\left(  \mathbf{I},\boldsymbol{\pi},\mathbf{f}\right)  $ with payouts
$W_{i}$ defined in (\ref{MD5}), provided the allocated shares are given by
\[
\mathbf{f}=\mathbf{L},
\]
where
\[
\mathbf{L}=\left(  L_{1},L_{2},\ldots,L_{n}\right)  \text{. }%
\]
This rule pays the survivors the relative fraction, i.e.\ their personal
insurance claim against the aggregate insurance claim of the surviviors, of
the available funds.\ 

In order to be able to apply (\ref{T3}), the participants and the
administrator only have to agree on the vectors $\boldsymbol{\pi}$ and
$\mathbf{L}$. This means that they only have to decide and agree on what
everyone invests at time $0$ and on what the participants would receive as
survival benefit in a classical pure endowment insurance environment for their
investment used as a premium. The choice of the premium principle or a
mortality table is not required.\ 

So far, we did not consider the choice of $\pi_{n+1}$.\ A possible choice for
the administrator's contribution is given by (\ref{E8}),
\[
\pi_{n+1}=\left(  \sum_{j=1}^{n}\pi_{j}\right)  \times\frac{\Pr\left[
I_{n+1}=1\right]  }{\Pr\left[  I_{n+1}=0\right]  }.
\]
This choice of $\pi_{n+1}$ makes the tontine fund $\left(  \mathbf{I}%
,\boldsymbol{\pi},\mathbf{L}\right)  $ fair for the administrator, and hence,
also collective fair for the group of participants.\ 

A possible way to fix $\mathbf{L}$ is chosing the $L_{i}$ such that
\[
\pi_{i}=\frac{1}{\left(  1+R^{\prime}\right)  }\times L_{i}\times p_{i}%
,\qquad\text{for }i=1,2,\ldots,n
\]
for a given (agreed) life table and technical interest $R^{\prime}$. This
means that the participants agree on a lifetable and choose the amounts
$L_{i}$ as the survival benefit corresponding to the net premium in a pure
endowment insurance with net premium $\pi_{i}$. Under this choice, we find
that (\ref{T3}) reduces to%
\begin{equation}
W_{i}=\left(  1+R\right)  \times\left(  \sum_{j=1}^{n+1}\pi_{j}\right)
\times\frac{\frac{\pi_{i}}{p_{i}}}{\sum_{j=1}^{n+1}\frac{\pi_{j}}{p_{j}}\times
I_{j}}\times I_{i},\qquad\text{for }i=1,2,\ldots,n+1. \label{T3'}%
\end{equation}

In the following section, we will come back to the particular payout scheme
defined in (\ref{T3'}).\ 

\section{Tontine funds with an internal share allocation scheme.\ }


Let us consider a group of $n$ persons with survival indicator vector
$\mathbf{I}=\left(  I_{1},I_{2},\ldots,I_{n}\right)  $.\ As mentioned above, a
tontine fund for this group is characterized by $\left(  \mathbf{I}%
,\boldsymbol{\pi},\mathbf{f}\right)  $.\ Let us now assume that the $n$
participants and the administrator agree on a probability vector
\[
\mathbf{p}=\left(  p_{1},\text{ }p_{2},\ldots,\text{ }p_{n}\right)  ,
\]
where $p_{i},i=1,2,\ldots,n$ is the survival probability of participant $i$,
i.e. $p_{i}=P\left[  I_{i}=1\right]  $.\ The vector $\mathbf{p}$ of the
survival probabilities has to be interpreted as an 'agreed vector', which may
be different from the 'real vector' of the survival probabilities of the participants.\ 

We assume that the share allocation vector $\mathbf{f}$ is a function of the
contribution vector $\boldsymbol{\pi}$\ and the probability vector
$\mathbf{p}$:
\[
\mathbf{f=f}\left(  \boldsymbol{\pi},\mathbf{p}\right)  =\left(  f_{1}\left(
\boldsymbol{\pi},\mathbf{p}\right)  ,f_{2}\left(  \boldsymbol{\pi}%
,\mathbf{p}\right)  ,\ldots,f_{n}\left(  \boldsymbol{\pi},\mathbf{p}\right)
\right)  ,
\]
where the value of $f_{i}\left(  \boldsymbol{\pi},\mathbf{p}\right)  $
corresponds to the number of tontine shares received by person $i$ in the
tontine fund $\left(  \mathbf{I},\boldsymbol{\pi},\mathbf{f}\right)  $.\ 

Each $f_{i}\left(  \boldsymbol{\pi},\mathbf{p}\right)  $, $i=1,2,\ldots,n$,
can be interpreted as a measure of the 'risk exposure' of the corresponding
participant, taking into account the information on initial investments and
survival probabilities of all participants.\ We call the function $\mathbf{f}$
an \textit{internal share allocation scheme} in the sense that the allocated
number of shares only depends on internal information of the pool, i.e. on
$\left(  \boldsymbol{\pi},\mathbf{p}\right)  $.\ More generally, one could
also consider more complex share allocation schemes, where the number of
allocated shares does not only depend on $\boldsymbol{\pi}$ and $\mathbf{p}$,
but also on other deterministic information and time $1$ observable random
variables, such as the state of the economy, the occurrence (or not) of a
pandemic over the coming year, the precise magnitude of medical inflation over
the coming year, etc.\ 

From (\ref{MD5}), we find that the payouts $W_{i}$ can be expressed as
follows:
\begin{equation}
W_{i}=\left(  1+R\right)  \times\left(  \sum_{j=1}^{n+1}\pi_{j}\right)
\times\frac{f_{i}\left(  \boldsymbol{\pi},\mathbf{p}\right)  }{\sum
_{j=1}^{n+1}f_{j}\left(  \boldsymbol{\pi},\mathbf{p}\right)  \times I_{j}%
}\times I_{i},\qquad\text{for }i=1,2,\ldots,n+1. \label{D25}%
\end{equation}

At the set-up of the tontine fund with an internal share allocation rule, an
assumption about (or agreement on) the survival probabilities of the
participants are required to be able to define the payouts. But once the
set-up is ready, i.e. once the tontine fund is launched, it only needs an
administrator who collects info about the survival or death of the
participants, from which he can then determine each payout $W_{i}$ via formula
(\ref{D25}).\ Notice that any strictly positive choice for $f_{n+1}\left(
\boldsymbol{\pi},\mathbf{p}\right)  $\ can be made as the payouts are not
dependent on this choice.

In certain situations, it may be reasonable to impose a linear behaviour
between the number of allocated tontine shares $f_{i}$\ and the initial
investment $\pi_{i}$, when the survival probability $p_{i}$ is fixed.\ That
means that it may be appropriate to assume that
\begin{equation}
f_{i}\left(  \boldsymbol{\pi},\mathbf{p}\right)  =\pi_{i}\times g\left(
p_{i}\right)  ,\qquad\text{for }i=1,2,\ldots,n\text{.} \label{D8}%
\end{equation}
for all participants $i$,\ where $g$ is strictly positive and decreasing (or
increasing, or something else) in the survival probability $p_{i}$. A
decreasing $g$ corresponds to the mathematical translation that 'a participant
with a smaller survival probability receives a larger number of tontine shares
than a person with the same initial investment but higher survival
probability.'\ Such an approach is inspired by the idea that the person with a
smaller survival probability has a higher chance of losing his initial
investment.\ On the other side, in case one imposes an increasing $g$ that
means that the allocation rule is such that it favours participants with
higher survival probabilities. This might be a desirable property in a closely
connected social group, and would be in the hands of the scheme's architects.
Finally, notice that we assume here that $g$ is not participant-specific and
hence, does not depend on $i$. More generally, one could introduce a
participant-specific function $g_{i}$.

In case (\ref{D8}) holds, we have that (\ref{D25}) transforms into
\begin{equation}
W_{i}=\left(  1+R\right)  \times\left(  \sum_{j=1}^{n+1}\pi_{j}\right)
\times\frac{\pi_{i}\times g\left(  p_{i}\right)  }{\sum_{j=1}^{n+1}\pi
_{j}\times g\left(  p_{j}\right)  \times I_{j}}\times I_{i},\qquad\text{for
}i=1,2,\ldots,n+1. \label{D9}%
\end{equation}

Hereafter, we introduce some important special cases of the share allocation
scheme defined in (\ref{D25}).

\begin{example}
The \textbf{DM} allocation scheme. \newline Let us assume the internal share
allocation scheme (\ref{D8}), where $g\left(  p_{i}\right)  =1/p_{i}$.\ In
other words, we consider the following tontine share allocation scheme:
\begin{equation}
f_{i}^{\text{DM}}\left(  \boldsymbol{\pi},\mathbf{p}\right)  =\frac{\pi_{i}%
}{p_{i}},\qquad i=1,2,\ldots,n+1\text{.} \label{MD2}%
\end{equation}
In this case, the payouts (\ref{D9}) of the participants are given by%
\begin{equation}
W_{i}^{\text{DM}}=\left(  1+R\right)  \times\left(  \sum_{j=1}^{n+1}\pi
_{j}\right)  \times\frac{\frac{\pi_{i}}{p_{i}}}{\sum_{j=1}^{n+1}\frac{\pi_{j}%
}{p_{j}}\times I_{j}}\times I_{i},\qquad\text{for }i=1,2,\ldots,n+1,
\label{MD3}%
\end{equation}
which corresponds with the tontine fund payouts $W_{i}$\ that we introduced in
(\ref{T3'}).\ A special case was considered in (\ref{rule}).\ Notice that we
have chosen $f_{n+1}^{\text{DM}}\left(  \boldsymbol{\pi},\mathbf{p}\right)
=\frac{\pi_{n+1}}{p_{n+1}}$, with $p_{n+1}=P\left[  I_{n+1}=1\right]  $, but
any other strictly positive value of $f_{n+1}^{\text{DM}}\left(
\boldsymbol{\pi},\mathbf{p}\right)  $ will lead to the same payouts.\ \newline
A motivation for this allocation (\ref{MD2}) in terms of traditional insurance
benefits was given in the previous section.\ 

It's interesting to note that a rather special case of the above will arise if
all participants are required to have the same risk exposure in the sense that
$\frac{\pi_{i}}{p_{i}}=c$, where $c$, is a given constant. In that special
case, for a given (or assumed) vector of survival probabilities, the
investments are then given by:
\[
\pi_{i}=c\times p_{i},\qquad i=1,2,\ldots,n+1,
\]
and (\ref{MD3}) reduces to:
\[
W_{i}=\left(  1+R\right)  \times\left(  \sum_{j=1}^{n+1}\pi_{j}\right)
\times\frac{I_{i}}{\sum_{j=1}^{n+1}I_{j}},\qquad\text{for }i=1,2,\ldots,n+1
\]

\end{example}

\begin{example}
The \textbf{T} allocation scheme. \newline Consider the internal share
allocation scheme (\ref{D8}) with $g(p)\equiv1$:\
\[
f_{i}^{\text{T}}\left(  \boldsymbol{\pi},\mathbf{p}\right)  =\pi_{i}%
,\qquad\text{for }i=1,2,\ldots,n+1\text{.}%
\]
Then (\ref{D9}) becomes%
\begin{equation}
W_{i}^{\text{T}}=\left(  1+R\right)  \times\left(  \sum_{j=1}^{n+1}\pi
_{j}\right)  \times\frac{\pi_{i}}{\sum_{j=1}^{n+1}\pi_{j}\times I_{j}}\times
I_{i},\qquad\text{for }i=1,2,\ldots,n+1. \label{D21}%
\end{equation}
Notice that from (\ref{D7}), it follows that the time -$\ 0$ value of a
tontine share, notation $S^{\text{T}}\left(  0\right)  $ is now given by
\[
S^{\text{T}}(0)=\frac{\sum_{j=1}^{n+1}\pi_{j}}{\sum_{j=1}^{n}\pi_{j}}%
=1+\frac{\pi_{n+1}}{\sum_{j=1}^{n}\pi_{j}}.
\]
From (\ref{D21'}) it follows then that (\ref{D21})\ can be rewritten as
follows:%
\begin{align}
W_{i}^{\text{T}}  &  =\left(  1+\frac{\pi_{n+1}}{\sum_{j=1}^{n}\pi_{j}%
}\right)  \times\left(  1+R\right)  \times\pi_{i}\times\left(  1+\frac
{\sum_{j=1}^{n+1}\pi_{j}\times\left(  1-I_{j}\right)  -\pi_{n+1}}{\sum
_{j=1}^{n+1}\pi_{j}\times I_{j}}\right)  \times I_{i},\nonumber\\
\text{for }i  &  =1,2,\ldots,n+1. \label{D21a}%
\end{align}
\newline When $\pi_{n+1}=0$, formula (\ref{D21a}) remains to hold, provided we
replace $\pi_{n+1}$ in $\frac{\sum_{j=1}^{n+1}\pi_{j}\times\left(
1-I_{j}\right)  -\pi_{n+1}}{\sum_{j=1}^{n+1}\pi_{j}\times I_{j}}$ by a
strictly positive value $f_{n+1}$. In this case, we find that%
\[
W_{i}^{\text{T}}=\left(  1+R\right)  \times\pi_{i}\times\left(  1+\frac
{\sum_{j=1}^{n}\pi_{j}\times\left(  1-I_{j}\right)  -f_{n+1}\times I_{n+1}%
}{\sum_{j=1}^{n}\pi_{j}\times I_{j}+f_{n+1}\times I_{n+1}}\right)  \times
I_{i},\qquad\text{for }i=1,2,\ldots,n+1.
\]
In case at least one participant survives, i.e. $I_{n+1}=0$, we have that
\[
\left(  W_{i}^{\text{T}}\mid I_{n+1}=0\right)  =\left(  1+R\right)  \times
\pi_{i}\times\left(  1+\frac{\sum_{j=1}^{n}\pi_{j}\times\left(  1-I_{j}%
\right)  }{\sum_{j=1}^{n}\pi_{j}\times I_{j}}\right)  \times I_{i}%
,\qquad\text{for }i=1,2,\ldots,n.
\]

\end{example}

In the questionnaire survey that we noted in the early part of the paper, one
of the replies\footnote{Private communication from Bertrand Tavin.} that we
received was the above-noted formula, and which we denote as Tavin allocation
scheme. That scheme favours younger participants, individuals with higher
survival probabilities. Indeed, consider two participants, $i$ and $j$, who
invest the same amount $\pi_{i}=$ $\pi_{j}$, but the first one is younger than
the second one in the sense that $p_{i}>$ $p_{j}$.\ Then obviously, the
younger person is favoured as in the case of survival, both receive the same
amount, although the younger one has a higher survival probability. To
paraphrase Tavin (2023):

\begin{quote}
``...In this allocation, the recorded amount upon survival is only driven by
the agent's initial stake compared to the others' stakes. This system plays a
role in terms of the welfare of the social group.\ Namely, there is a
reallocation of wealth (the total amount in the fund) that is favourable to
those who are likely to survive long after the liquidation of the tontine,
compared to the risk-return-based allocation, which favours the agents who are
likely not to survive long after the liquidation of the tontine.\ This system
increases the group's welfare if we look at the welfare obtained by the
surviving agents after time $1$. Conditional on survival at time $1$, those
likely to live long after time $1$ need more (because they probably need to
take care of an elder parent or children) and are more likely to have projects
that benefit the social group (e.g.\ opening or financing a business).\ On the
other hand, the agent who is not likely to survive long after time $1$ will
probably not have enough time to enjoy the received amount..."
\end{quote}

\begin{example}
Consider the share allocation rule with:%
\[
f_{i}\left(  \boldsymbol{\pi},\mathbf{p}\right)  =\frac{1}{p_{i}},\qquad
i=1,2,\ldots,n\text{.}%
\]
In this case, we find from (\ref{D25}) that
\begin{equation}
W_{i}=\left(  1+R\right)  \times\left(  \sum_{j=1}^{n+1}\pi_{j}\right)
\times\frac{\frac{1}{p_{i}}}{\sum_{j=1}^{n+1}\frac{I_{j}}{p_{j}}}\times
\frac{I_{i}}{p_{i}},\qquad\text{for }i=1,2,\ldots,n+1\text{.} \label{D30}%
\end{equation}
This rule favors 'poorer' participants (i.e. participants who invest less).
Indeed, consider two persons $i$ and $j$ with initial investments $\pi_{i}<$
$\pi_{j}$. Suppose that both have the same survival probability.\ Then in case
of survival both receive the same amount, whereas person $i$ invested less.
\end{example}

\begin{example}
The \textbf{DR} allocation scheme, following the work of Denuit \& Robert
(2023). \newline Consider the uniform rule with%
\[
f_{i}^{\text{DR}}=1,\qquad i=1,2,\ldots,n+1.
\]
In this case, we find from (\ref{D25}) that
\begin{equation}
W_{i}^{\text{DR}}=\left(  1+R\right)  \times\left(  \sum_{j=1}^{n+1}\pi
_{j}\right)  \times\frac{I_{i}}{\sum_{j=1}^{n+1}I_{j}},\qquad\text{for
}i=1,2,\ldots,n+1\text{.} \label{T7}%
\end{equation}
From (\ref{D7}), we find that $S^{\text{DR}}(0)=\frac{\sum_{j=1}^{n+1}\pi_{j}%
}{n}$.\ Hence, from (\ref{D21'}), we find that
\[
W_{i}^{\text{DR}}=\left(  1+R\right)  \times\frac{\sum_{j=1}^{n+1}\pi_{j}}%
{n}\times\left(  1+\frac{\sum_{j=1}^{n+1}\left(  1-I_{j}\right)  -1}%
{\sum_{j=1}^{n+1}I_{j}}\right)  \times I_{i},\qquad\text{for }i=1,2,\ldots
,n+1\text{.}%
\]

This scheme is advantageous for individuals who are younger and poorer.
Suppose there are two people, $i$ and $j$. If $i$ is younger and has a higher
chance of survival, but is poorer and pays less to the tontine fund than $j$,
but pays less, he will receive the same payout money if they both survive.
Therefore, the tontine arrangement benefits the younger and less affluent
person $i$. A similar scheme can be found in Denuit \& Robert (2023), with the
difference being that they make their rule fair by returning the initial
investments if all participants pass away and defining the initial investments
so that the allocations are actuarially fair.
\end{example}

While this isn't the main focus of our paper, we should note the following
about the (rather loaded term) \textquotedblleft actuarial
fairness\textquotedblright. Namely, taking into account Theorem 4, the above
discussed allocation schemes or arrangements -- although not generally fair to
any given individual -- can be made collectively fair (in the sense of
Definition 3), or perhaps the proper word is actuarially \textquotedblleft
just\textquotedblright\ to add another term of the growing lexicon, by
introducing the administrator.

\section{Summary and Conclusion}

The launching pad for this paper -- both conceptually and in practice --
revolves around the many justifiable ways in which a group of heterogenous
individuals could in theory, share the proceeds of a (longevity) risk-pooling
agreement. We motivated the examples using the concept of a one-period
tontine, a product that is enjoying a resurgence of interest worldwide, both
in academia and industry. If members of this heterogeneous group invested
unequal amounts into the tontine pool, our (small pool) numerical example made
the multiplicity of possible solutions evident. Therefore, the primary
contribution of this paper is to argue that the payout structure for a tontine
fund can be quite comprehensive, catering to a broad range of groups wishing
to share longevity risks without the interference of an external entity
assuming the risk of insolvency.

These insights are particularly beneficial for closely-knit smaller groups
aiming to redistribute wealth from their older, wealthier members to their
younger, less prosperous counterparts. In such scenarios, the emphasis isn't
on actuarial fairness or the magnitude of the administrator's contribution.
Instead, the focus is on the collective benefit of the group, as the
administrator embodies the group's interests, and their contributions directly
benefit the group.

Our methodology can accommodate larger groups, even when participants may not
share social connections or interpersonal ties. In these situations surviving
members ought to be compensated based on the actuarial risks they've accepted
and been exposed to. Individuals with a lower likelihood of survival should be
entitled to a more substantial reward. An external entity, like an insurance
company or a government regulator, administrator in these contexts. They could
also contribute in scenarios with a significant likelihood of none of the
participants surviving.

In sum, while the objective of modern tontines, and more generally uninsured
decumulation products (UDP),\footnote{This is the term recently introduced by
Canadian regulators to describe the arrangements of this sort. See:
\emph{https://www.fsrao.ca/regulation/guidance/understanding-decumulation-products}%
.} is to eliminate the costly capital associated with insurance
\textbf{guarantees}, we are not advocating the elimination of insurance
\textbf{regulators.} Rather, under these arrangements, the role of the
regulator would be to administer the fund -- in exchange for \textquotedblleft
a piece of the action\textquotedblright\ -- which would serve two distinct
roles. First, oversight. They would ensure all participants in the scheme were
abiding by their obligations and commitments. Second, and just as importantly,
administrators in the scheme would make it collectively actuarially fair, that
is, socially just.

The next step is to extend this one-period framework to a multiperiod tontine
fund, which would be constructed as a sequence of linked one-period funds.
Defining the relations between indicator vectors, premium vectors, share
allocation vectors, and the all-important payout vectors in consecutive
periods is left in an honoured tradition for future research. In the same
category of plans for future research, we leave the discussion of allowing
$\pi_{i}$ and even $f_{i}$ to equal zero, allowing certain groups to avoid
paying (and still benefiting) or not benefiting (and still paying.) Examples
would be targetted demographic groups such as the young and old respectively.

We conclude by noting that the single-period tontine fund, which is described
within the body of this paper can be treated or viewed as a special case of
(what we will call) \textit{compensation-based} decentralized risk-sharing
(DRS) arrangements, where at time $t=0$ one contributes (deterministic)
premiums (or investments) and at time $t=1$ one receives (random)
compensations, which are set such that the risk-sharing scheme is
self-financing. Like the literature on tontines, there is a growing literature
on this type of DRS. A somewhat less obvious insight is that the single period
tontine fund can also be expressed in the form of (what we will call) a
\textit{contribution-based} DRS scheme, which is characterized by time $1$
(random) contributions and time $1$ (random) benefits or claims, and where the
compensations are determined such that the risk-sharing scheme is again
self-financing. There is also a growing literature on this type of DRS. For an
overview of a unified theory of DRS, we refer to Feng (2023) and the
references in that book.\ The above-mentioned observations are further
explored in some detail in the Appendix.

\subsection{Acknowledgement \& Thanks.}

The authors would like to acknowledge many helpful comments and input, as well
as the efforts involved in responding to our survey questionnaire, from Robert
Bertrand, Servaes van Bilsen, Andrew Carrothers, Doug Chandler, Michel Denuit,
Runhuan Feng, Faisal Habib, Peter Hieber, Aleksi Leeuwenkamp, Andrew
McDiarmid, Kent McKeever, Branislav Nikolic, Christian Robert and Thomas
Salisbury. The early drafts of this paper benefited (immensely) from their
insights and individuals have been noted, and thanked in specific places, as
deemed appropriate. Finally, Jan Dhaene gratefully acknowledges funding from
FWO and F.R.S.-FNRS under the Excellence of Science (EOS) programme, project
ASTeRISK (40007517), and Moshe A. Milevsky acknowledges funding from the IFID
Centre (Grant \# 2023.12).

\newpage

\section*{References}

\begin{enumerate}
\item Ayuso, M., Bravo, J. M., \& Holzmann, R. (2017). Addressing Longevity
Heterogeneity in Pension Scheme Design. \textit{Journal of Finance and
Economics}, 6(1), 1-21.


\item Bernard, Feliciangeli \& Vanduffel (2022), Optimal Risk Sharing in an
Actuarially Unfair Tontine, Presentation at 8th Workshop on Risk Management
and Insurance Research (RISK2022), Working paper.

\item Bernhardt, T., \& Donnelly, C. (2019). Modern tontine with bequest:
Innovation in pooled annuity products. \textit{Insurance: Mathematics and
Economics}, 86, 168-188.

\item Bernhardt, T. \& Qu, G. (2023) Wealth heterogeneity in a closed pooled
annuity fund, \textit{Scandinavian Actuarial Journal},

\item Bravo J., M. Ayuso, R. Holzmann, \& E. Palmer (2023), Intergenerational
actuarial fairness when longevity increases, \textit{Insurance: Mathematics
and Economics}, in press.

\item Chetty, R., Stepner, M., Abraham, S., Lin, S., Scuderi, B., Turner, N.,
Bergeron, A., \& Cutler, D. (2016). The Association Between Income and Life
Expectancy in the United States, 2001-2014. \textit{JAMA}, 315(16), 1750-1766.

\item Chen, A., Chen, Y., \& Xu, X. (2022). Care-dependent Tontines.
\textit{Insurance: Mathematics and Economics.}

\item Cheung, K. C., \& Lo, A. (2014). Characterizing mutual exclusivity as
the strongest negative multivariate dependence structure. \textit{Insurance:
Mathematics and Economics}, 55, 180-190.

\item Coppola, M., Russolillo, M., \& Simone, R. (2022). On the evolution of
the gender gap in life expectancy at normal retirement age for OECD countries.
\textit{Genus}, 78(1).

\item Couillard, B. K., Foote, C. L., Gandhi, K., Meara, E., \& Skinner, J.
(2021). Rising Geographic Disparities in US Mortality. \textit{J Econ
Perspect}, 35(4), 123-146.

\item Denuit, M., Robert, C.Y. (2023).\ Endowment contingency funds for mutual
aid and public financing, \emph{working paper}.

\item Denuit, M., Hieber, P., \& Robert, C.Y. (2022). Mortality credits within
large survivor funds. \textit{ASTIN Bulletin}, 52(3), 813-834.

\item Denuit M., Dhaene J. (2012). Convex order and comonotonic conditional
mean risk sharing. \textit{Insurance: Mathematics and Economics} 51, 249-256.

\item Denuit M., Dhaene J, \& Robert C.Y. (2022). Risk-sharing rules and their
properties, with applications to peer-to-peer insurance. \textit{Journal of
Risk and Insurance}, 89(3), 615-667.

\item Denuit M., Dhaene J, Ghossoub M., \& Robert C.Y. (2022). Comonotonicity
and Pareto optimality, with application to collaborative
insurance.\ \textit{LIDAM Paper 2023/05}.

\item Dhaene, J., Denuit, M.\ (1999).\ The safest dependency structure among
risks.\ Insurance: Mathematics \& Economics, 25, 11-21.

\item Donnelly, C. (2018). Methods of Pooling Longevity Risk, \emph{Actuarial
Research Centre.}

\item Donnelly, C., Guillien, M., \& Nielsen, J. P. (2014). Bringing cost
transparency to the life annuity market. \textit{Insurance: Mathematics and
Economics}, 56, 14-27.

\item Dudel, C., \& van Raalte, A. A. (2023). Educational inequalities in life
expectancy: measures, mapping, meaning. \textit{J Epidemiol Community Health},
77(7), 417-418.

\item Feng, R. and P. Liu (2024), A Unified Theory of Multi-Period
Decentralized Insurance and Annuities, \emph{working paper}.

\item Feng, R. (2023). \emph{Decentralized Insurance: Technical Foundation of
Business Models}, Springer International Publishing, https://doi.org/10.1007/978-3-031-29559-1

\item Finegood, E. D., Briley, D. A., Turiano, N. A., Freedman, A., South, S.
C., Krueger, R. F., Chen, E., Mroczek, D. K., \& Miller, G. E. (2021).
Association of Wealth With Longevity in US Adults at Midlife. \textit{JAMA
Health Forum}, 2(7), e211652.

\item Forman, J. B., \& Sabin, J. M. (2015). Tontine pensions.
\textit{University of Pennsylvania Law Review}, 163(3), 755-831.

\item Himmelstein, K. E. W., Lawrence, J. A., Jahn, J. L., Ceasar, J. N.,
Morse, M., Bassett, M. T., Wispelwey, B. P., Darity, W. A., Jr., \&
Venkataramani, A. S. (2022). Association Between Racial Wealth Inequities and
Racial Disparities in Longevity Among US Adults and Role of Reparations
Payments, 1992 to 2018. \textit{JAMA Netw Open}, 5(11), e2240519.

\item Jiao, Z., Kou, S., Liu, Y., \& Wang, R. (2022). An axiomatic theory for
anonymized risk sharing. \textit{arXiv:2208.07533.}

\item Kinge, J. M., Modalsli, J. H., Overland, S., Gjessing, H. K., Tollanes,
M. C., Knudsen, A. K., Skirbekk, V., Strand, B. H., Haberg, S. E., \& Vollset,
S. E. (2019). Association of Household Income With Life Expectancy and
Cause-Specific Mortality in Norway, 2005-2015. \textit{JAMA}, 321(19), 1916-1925.

\item Lauzier, J. G., Lin, L., \& Wang, R. (2024). Negatively dependent
optimal risk sharing. \textit{arXiv preprint arXiv:2401.03328.}

\item Li, H., \& Hyndman, R. J. (2021). Assessing mortality inequality in the
U.S.: What can be said about the future? \textit{Insurance: Mathematics and
Economics}, 99, 152-162.

\item Lin, T. Y., Chen, C. Y., Tsao, C. Y., \& Hsu, K. H. (2017). The
association between personal income and aging: A population-based 13-year
longitudinal study. \textit{Arch Gerontol Geriatr}, 70, 76-83.

\item Mackenbach, J. P., Valverde, J. R., Bopp, M., Bronnum-Hansen, H.,
Deboosere, P., Kalediene, R., Kovacs, K., Leinsalu, M., Martikainen, P.,
Menvielle, G., Regidor, E., \& Nusselder, W. (2019). Determinants of
inequalities in life expectancy: International comparative study of 8 risk
factors. \textit{Lancet Public Health}, 4(10), 529-537.

\item McKeever, K. (2009). A short history of tontines. \textit{Fordham J.
Corp. \& Fin. L.}, 15, 491.

\item Milevsky, M. A., \& Salisbury, T. S. (2015). Optimal retirement income
tontines. \textit{Insurance: Mathematics and Economics}, 64, 91-105.

\item Milligan, K., \& Schirle, T. (2021). The evolution of longevity:
Evidence from Canada. \textit{Canadian Journal of Economics}, 54(1), 164-192.

\item Perez-Salamero Gonzalez, J. M., Regulez-Castillo, M., Ventura-Marco, M.,
\& Vidal-Melia, C. (2022). Mortality and life expectancy trends in Spain by
pension income level for male pensioners in the general regime retiring at the
statutory age, 2005-2018. \textit{Int J Equity Health}, 21(1), 96.

\item Piggott, J., Valdez, E. A., \& Detzel, B. (2005). The simple analytics
of a pooled annuity fund. \textit{Journal of Risk and Insurance}, 72(3), 497-520.

\item Pitacco, E. (2019). Heterogeneity in mortality: a survey with an
actuarial focus. \textit{European Actuarial Journal}, 9(1), 3-30.

\item Sanzenbacher, G. T., Webb, A., Cosgrove, C. M., \& Orlova, N. (2019).
Rising Inequality in Life Expectancy by Socioeconomic Status. \textit{North
American Actuarial Journal}, 25(sup1), S566-S581.

\item Shi, J., \& Kolk, M. (2022). How Does Mortality Contribute to Lifetime
Pension Inequality? Evidence From Five Decades of Swedish Taxation Data.
\textit{Demography}, 59(5), 1843-1871.

\item Simonovits, A., \& Lack, M. (2023). A simple estimation of the longevity
gap and redistribution in the pension system. \textit{Acta Oeconomica}, 73(2), 275-284.

\item Sloan, F. A., Ayyagari, P., Salm, M., \& Grossman, D. (2010). The
longevity gap between Black and White men in the United States at the
beginning and end of the 20th century. \textit{Am J Public Health}, 100(2), 357-363.

\item Strozza, C., Vigezzi, S., Callaway, J., Kashnitsky, I., Aleksandrovs,
A., \& Vaupel, J. W. (2022). Socioeconomic inequalities in survival to
retirement age: a register-based analysis.

\item Stamos, M. Z. (2008). Optimal consumption and portfolio choice for
pooled annuity funds. \textit{Insurance: Mathematics and Economics}, 43(1), 56-68.

\item Weinert, J. H., \& Grundel, H. (2021). The modern tontine: An innovative
instrument for longevity risk management in an aging society. \textit{European
Actuarial Journal}, 11(1), 49-86.
\end{enumerate}

\newpage

\section{Appendix: Tontines \& Decentralized Risk Sharing}

Two areas of insurance research whose literature has grown lately are: (i)
decentralized risk-sharing systems and (ii) retirement tontine arrangements.
The former system is characterized by a risk-sharing rule without solvency
capital or a formal third-party guarantor. The latter investment arrangements
are designed as alternatives to life annuities, characterized by a group
covenant in which longevity risk is pooled and mortality credits aren't
guaranteed. This appendix models the concepts of single period decentralized
risk-sharing and tontine endowments -- which was the focus on the main body of
the paper -- in one unified framework and \textquotedblleft
proves\textquotedblright\ they can be viewed as mathematical duals of each
other. While others, and especially the recent work by Feng \& Liu (2024),
have hinted at these connections, our objective in this appendix is to
continue the work of unifying two disparate literatures under one banner.

\subsection{A Brief Review}

To illustrate and explain the essence of \emph{decentralized risk sharing}
(DRS) in which nothing is guaranteed and therefore, no solvency risk capital
is required, we begin by reviewing the basics of classical insurance theory
and thus set notation and terminology as well. To that end, consider a pool of
$n>1$ economic agents or individual policyholder participants. Each one of
these $n$ policyholders purchases an insurance contract at time $t=0$, which
entitles him or her to a random claim amount denoted by $X_{i}$, measurable
and payable at time $T=1$. This claim can be a random loss related to a
well-defined peril (e.g. hospitalization-related expenses) , or it can be a
random benefit contingent on the occurrence of a well-defined event (e.g. a
pure endowment, which entitles the survivor to a predefined amount of money)
due at time $1$.

For simplicity, we assume that the above-noted time interval $\left[
0,1\right]  $, is a calendar year, but one could obviously adopt and modify
what follows to larger time intervals of $T=5$, $T=10$, or $T=20$ years, etc.
Rather, the point here is that our setup is a one-period model with no
intermediate cash-flows or payouts.

Suppose further that each policyholder $i$ pays the insurer (to be explained)
a premium amount $\pi_{i}$ at time $t=0$, to acquire said protection or
benefit and the totality of all premiums $\sum_{i=1}^{n}+\pi_{{i}}$ is
collectively invested in a risk-free (i.e. default-free) asset, subject to a
return of $R$ per period. In other words $\$1$ grows to $\$\left(  1+R\right)
^{r}$ at the end of the period.

Now let's focus on the premiums $\pi_{i}$ and how they are determined. The job
of an actuary -- regardless of the particular risk being insured -- is to
ensure that the probability that the aggregate claims $\sum_{i=1}^{n}X_{i}$
due at time $1$ are not larger than the available assets at time $1$ for the
insurance portfolio under consideration.\ Here, the available assets consist
of the time-$1$ value of the premiums $\left(  1+R\right)  \sum_{i=1}^{n}%
\pi_{i}$ and the time-$1$ value $(1+R)SC$ of the solvency capital $SC$ to be
set up at time $0$.\ In particular, the insurer will become insolvent and end
the period in bankruptcy if the following undesirable event occurs:%
\[
\left(  1+R\right)  \left(  \sum_{i=1}^{n}\pi_{i}+SC\right)  <\sum_{i=1}%
^{n}X_{i}%
\]
While the above condition is rather obvious and intuitive, namely that
premiums and solvency capital haven't accumulated enough to pay all the
claims, the opposite condition can be re-written and expressed in the
following (equally obvious) manner:%
\begin{equation}
\text{\textbf{ Insurance Solvency Condition: \hspace{0.2in}} }\sum_{j=1}%
^{n}X_{j}\leq\left(  1+R\right)  \left(  \sum_{j=1}^{n}\pi_{j}+SC\right)  .
\label{E1a}%
\end{equation}

This intuition is straightforward namely the insurer (i.e. guarantor) can
fulfill his/her liabilities \emph{if-and-only-if} the total claims to be paid
at time $T=1$ are \textbf{not} larger than the accumulated value of the
premiums and solvency capital, that is, the assets available to pay all claims.

Classical insurance is of the form of \textquotedblleft
centralized\textquotedblright\ (versus decentralized) risk sharing. This
implies or is associated with a risk-sharing mechanism under which individual
losses faced by policyholders of the pool are transferred to a central
insurer. Every single one of the $n$ policyholders is compensated
\emph{ex-post} from the insurer for the experienced loss, which we denoted by
$X_{i}$. In return for that total and absolute coverage, the insurer charges
an insurance premium \emph{ex-ante}, paid by each of the $n$ insured
policyholders. Now, the premiums $\pi_{i}$ themselves will follow from an
appropriate \emph{premium principle}, selected in such a manner so that the
probability of the event that the sum of all accumulated premiums and solvency
capital exceeds the aggregate loss of the insurance portfolio is sufficiently
high (e.g.\ 99.5\%). Now, and this is key, the centralized approach with
\emph{ex-ante} premiums requires capital to be set up by the insurer to be
able to meet his \emph{ex-post} obligations. Premiums should be large enough
so that the solvency capital does not have to be extremely high (which the
owners of the insurance company will not want), but premiums should not be too
large in order to remain competitive in the market.

\subsection{DRS via Compensations}

Consider a pool of $n+1$ individual random future claims $X_{i}$, each related
to to a well-defined peril or contingent benefit. The vector or pool of all
risks (benefits or losses) is denoted by $\mathbf{X}$:
\[
\mathbf{X}=\left(  X_{1},X_{2},\ldots,X_{n+1}\right)  .
\]
And, we assume that at time $0$, each participants pays a premium (or invests
an amount) of size $\pi_{i}$, with the vector of all premiums labeled the
\underline{\textbf{premium vector}}:
\[
\boldsymbol{\pi}=\left(  \pi_{1},\pi_{2},\ldots,\pi_{n+1}\right)  .
\]
Important to mention is that participant $n+1$ is the \textit{administrator}%
.\ He is the person responsible for the management of the fund (collecting
investments, paying compensations).\ Important is that the administrator can
also take part in the DRS scheme, by paying a premium $\pi_{n+1}$ and
receiving an appropriate compensation, see further.\ 

Decentralized Risk-Sharing (henceforth, DRS) refers to risk-sharing mechanisms
under which the participants in the pool share or allot the risks among each
other in such a way that the administrator of the payments incurs no
insolvency risk. In fact, to avoid any confusion with centralized
risk-sharing, which is only missing a \emph{de}, we introduce the concept of a
\emph{community}, which administers the scheme but doesn't guarantee anything.
One can think of this \emph{community} as a collection of individuals. And,
even if they are properly regulated, have formal administrators, and might
even be incorporated, they differ from insurance \emph{companies.}

To achieve that objective and figure out the allotment, premium payments are
made at time $T=0$, but these aren't premiums in the classical insurance
sense. Rather, each participant $i$ from the pool of size $n+1$ is eventually
and only partially compensated \emph{ex-post} for their loss $X_{i}$. In other
words, they do not receive of get reimbursed for the total amount $X_{i}$. The
pool -- or perhaps better labeled the community -- will pay or compensate in
the amount $W_{i}\left(  \mathbf{X}\right)  $, for a claim $X_{i}$. Now, the
vector of all compensations is called the \underline{\textbf{Compensation
Vector}}:%

\[
\mathbf{W}\left(  \mathbf{X}\right)  =\left(  W_{1}\left(  \mathbf{X}\right)
,W_{2}\left(  \mathbf{X}\right)  ,\ldots,W_{n+1}\left(  \mathbf{X}\right)
\right)
\]

The function which transforms any $\mathbf{X}$ into the compensation vector
$\mathbf{W}\left(  \mathbf{X}\right)  $ is called a compensation-based
risk-sharing rule. We do not necessarily assume that $\mathbf{W}$ is a
function from $\mathbb{R}^{n+1}$ to $\mathbb{R}^{n+1}$. Rather, we only assume
that $\mathbf{W}$ is a function from a given set of $\left(  n+1\right)
$-dimensional random vectors defined on a given probability space to this same
set of random vectors. Hereafter, we will denote the compensation-based DRS
described above by $\left(  \boldsymbol{\pi},\mathbf{W}\left(  \mathbf{X}%
\right)  \right)  $.

In order to avoid insolvency risk, the risk-sharing rule is such that the
following compensation-based condition is fulfilled:
\begin{equation}
\text{\textbf{Compensation-Based Solvency Condition}: }\sum_{j=1}^{n+1}%
W_{j}\left(  \mathbf{X}\right)  =\left(  1+R\right)  \sum_{j=1}^{n+1}\pi_{j},
\label{E9}%
\end{equation}
which means that the sum of all compensations is exactly equal to the time
$t=1$ value of the sum of all premiums paid by the participants and the
administrator. It should be clear that in the compensation-based risk-sharing
scheme $\left(  \boldsymbol{\pi},\mathbf{W}\left(  \mathbf{X}\right)  \right)
$, the time $1$ value of the total cash-inflow for participant $i$ is given
by:
\begin{equation}
W_{i}\left(  \mathbf{X}\right)  -\left(  1+R\right)  \times\pi_{i}.
\label{E10}%
\end{equation}

As an example, consider the case where for each participant $i$ with claim
$X_{i}$, the compensation function $W_{i}\left(  \mathbf{X}\right)  $ follows
from
\begin{equation}
W_{i}^{\text{prop}}\left(  \mathbf{X}\right)  =\left(  1+R\right)
\times\left(  \sum_{j=1}^{n+1}\pi_{j}\right)  \times\frac{X_{i}}{\sum
_{j=1}^{n+1}X_{j}}\qquad\text{for }i=1,2,\ldots,n+1. \label{E11}%
\end{equation}
This compensation-based risk-sharing rule is called the
\underline{\textbf{proportional risk-sharing rule}}: Given the aggregate
claims $\sum_{j=1}^{n+1}X_{j}$, each participant $i$ receives a compensation
proportional to his observed claim $X_{i}$, where the proportional factors are
determined such that the full compensation condition (\ref{E9}) is satisfied.
The key insight in the tontine literature is that similar to the Markowitz
trade-off between investment risk and return, policyholders should be able to
choose between insurance with (costly) guarantees versus pooling arrangements,
or perhaps even a mixture.

Now, just as we encountered for the one-period tontine fund described within
the body of the paper, in general there is non-zero probability that
$\sum_{j=1}^{n}X_{j}$ is equal to $0$ in (\ref{E11}). That is, using our
tontine fund language, that nobody survives to the end of the period. This is
the rationale for introducing the tontine administrator -- although in the
context of this appendix, this entity would be better described as the DRS
administrator. As far as the notation is concerned, the administrator would be
captured by $X_{n+1}$, being mutually exclusive, with $\sum_{j=1}^{n}X_{j}$.
In this case, $W_{i}^{\text{prop}}\left(  \mathbf{X}\right)  $ is always
well-defined, as the denominator in (\ref{E11}) is always strictly positive.

The tontine fund described in this paper arises as a special case of the
compensation-based DRS scheme (\ref{E11}) by choosing
\[
\mathbf{X}=\left(  f_{1}\times I_{1},\text{ }f_{2}\times I_{2},\ldots
,f_{n+1}\times I_{n+1}\right)  ,
\]
with the survival indicator and share allocation vectors $\mathbf{I}$ and
$\mathbf{f}$ as defined before in this paper.\ In this case, (\ref{E11})
transforms in (\ref{MD5}).

\subsection{DRS via Contributions}

In contrast to a system with premiums replaced by compensations, this section
considers a \emph{contribution}-based system for DRS. Consider again a pool of
$n+1$ individuals with random end-of-period claims $X_{i}$, properly
contracted and related to a well-defined peril or contingent benefit. The
total or combined vector of all dollar value claims is simply called the
\underline{\textbf{claims vector}}$\ $and is again denoted by the bold
$\mathbf{X}$:%
\[
\mathbf{X}=\left(  X_{1},X_{2},\ldots,X_{n+1}\right)
\]

As alluded earlier, DRS refers to risk-sharing mechanisms under which the
participants in the pool share their risks without generating or creating any
insolvency risk. To achieve that objective, in this particular section, we
assume that each of the $n+1$ participants in the risk-sharing pool is fully
compensated \textit{ex-post} for his claim $X_{i}$. But, in return each
participant pays an \emph{ex-post} contribution $C_{i}\left(  \mathbf{X}%
\right)  $ to the pool, which is managed by the insurance \emph{community}
versus a conventional insurance \emph{company.} The vector of all
contributions or allotments is called the \underline{\textbf{Contribution
Vector}}:%

\[
\mathbf{C}\left(  \mathbf{X}\right)  =\left(  C_{1}\left(  \mathbf{X}\right)
,C_{2}\left(  \mathbf{X}\right)  ,\ldots,C_{n+1}\left(  \mathbf{X}\right)
\right)
\]

Now, the function which transforms or maps any $\mathbf{X}$ into the
contribution vector $\mathbf{C}\left(  \mathbf{X}\right)  $ is called the
contribution-based risk-sharing rule. Note that in this formulation, we do not
necessarily assume that $\mathbf{C}$ is a function from $\mathbb{R}^{n}$ to
$\mathbb{R}^{n}$. Rather, the only assumption made here is that $\mathbf{C}$
is a function from an appropriate set of $n$-dimensional random vectors
defined on a given probability space, mapped to this same (or another) set of
random vectors. We denote the above-described contribution-based DRS scheme by
$\left(  \mathbf{X},\mathbf{C}\left(  \mathbf{X}\right)  \right)  $

Moving on, to avoid insolvency risk -- since there is no guarantor company --
which is present in centralized risk-sharing, we assume that the the
risk-sharing rule is such that the following condition, which we will call the
contribution-based solvency condition is fulfilled:%

\begin{equation}
\text{\textbf{Contribution-Based Solvency Condition}: }\sum_{j=1}^{n+1}%
X_{j}=\sum_{j=1}^{n+1}C_{j}\left(  \mathbf{X}\right)  , \label{E6b}%
\end{equation}

This condition is interpreted to mean that the sum of all contributions paid
by the participants (including the administrator) matches the sum of all
losses the pool covers. It should be clear that the cash-inflow for
participant $i$ at time $T=1$ in the above-described decentralized
risk-sharing scheme $\left(  \mathbf{X},\mathbf{C}\left(  \mathbf{X}\right)
\right)  $\ is given by:
\begin{equation}
X_{i}-C_{i}\left(  \mathbf{X}\right)  , \label{E7b}%
\end{equation}
which obviously may be positive or negative depending on the magnitude of
their loss relative to their share of the entire pool's losses.

For example, consider the \underline{\textbf{uniform risk-sharing rule}}
$\mathbf{C}^{\text{uni}}$, which is defined by
\begin{equation}
\mathbf{C}^{\text{uni}}\left(  \mathbf{X}\right)  =\left(  \frac{\sum
_{j=1}^{n+1}X_{j}}{n+1},\frac{\sum_{j=1}^{n+1}X_{j}}{n+1},\ldots,\frac
{\sum_{j=1}^{n+1}X_{j}}{n+1}\right)  \label{E8b}%
\end{equation}

Another example is the \underline{\textbf{conditional mean risk-sharing rule}}
$\mathbf{C}^{\text{cm}}$, introduced in the actuarial literature in Denuit \&
Dhaene (2012):%
\[
\mathbf{C}^{\text{cm}}\left(  \mathbf{X}\right)  =\left(  \mathbb{E}\left[
X_{1}\mid\sum_{j=1}^{n+1}X_{j}\right]  ,\mathbb{E}\left[  X_{2}\mid\sum
_{j=1}^{n+1}X_{j}\right]  ,\ldots,\mathbb{E}\left[  X_{n+1}\mid\sum
_{j=1}^{n+1}X_{j}\right]  \right)
\]
The properties of these and other risk-sharing rules have been investigated in
detail by Denuit, Dhaene \& Robert (2022) and Denuit, Dhaene, Ghossoub \&
Robert (2023) among others. An axiomatic characterization of the conditional
mean risk-sharing rule is given in Jiao, Kou, Liu \& Wang (2023){\small . }

\subsection{Some dualities}

In the subsections above, we have described two types of decentralized
risk-sharing; compensation-based and contribution-based systems, denoted by
$\left(  \boldsymbol{\pi},\mathbf{W}\left(  \mathbf{X}\right)  \right)  $\ and
$\left(  \mathbf{X},\mathbf{C}\left(  \mathbf{X}\right)  \right)  $,
respectively. Full solvency could be satisfied via appropriate time-$1$
compensations $\mathbf{W}\left(  \mathbf{X}\right)  $\ satisfying the
compensation-based solvency condition (\ref{E9}), or via appropriate time-$1$
contributions $\mathbf{C}\left(  \mathbf{X}\right)  $ satisfying the
contribution-based solvency condition in equation (\ref{E6b}).

Now, it is easy to see that in fact the two systems are 'equivalent'. Indeed,
the at time $1$ evaluated cash inflows for participant $i$ are equal under the
compensation-based system $\left(  \boldsymbol{\pi},\mathbf{W}\left(
\mathbf{X}\right)  \right)  $ and the contribution-based system $\left(
\mathbf{X},\mathbf{C}\left(  \mathbf{X}\right)  \right)  $ in case the
following conditions hold:%
\begin{equation}
W_{i}\left(  \mathbf{X}\right)  -\left(  1+R\right)  \times\pi_{i}=X_{i}%
-C_{i}\left(  \mathbf{X}\right)  ,\qquad i=1,2,\ldots,n+1. \label{E12}%
\end{equation}

This means that a contribution-based system $\left(  \mathbf{X},\mathbf{C}%
\left(  \mathbf{X}\right)  \right)  $ can be transformed into a
compensation-based system $\left(  \boldsymbol{\pi},\mathbf{W}\left(
\mathbf{X}\right)  \right)  $ where any person $i$ pays an arbitrary chosen
premiums $\pi_{i}$ at time $0$ and receives the following contribution at time
$1$:
\begin{equation}
W_{i}\left(  \mathbf{X}\right)  =\left(  1+R\right)  \times\pi_{i}+X_{i}%
-C_{i}\left(  \mathbf{X}\right)  ,\qquad i=1,2,\ldots,n+1. \label{E13}%
\end{equation}

Remark that transforming a contribution-based system into a compensation-based
system requires the choice of a premium vector.\ But the net payoff to the
participant is indifferent to the choice of this premium vector.\ 

Also remark that 'equivalence' has to be interpreted not in terms of equal
cashflows, but in terms of equal time-$1$ values of the total cashflows, which
is exactly expressed in the equations (\ref{E12}).

In a similar way, a compensation-based risk-sharing system $\left(
\boldsymbol{\pi},\mathbf{W}\left(  \mathbf{X}\right)  \right)  $ can be
transformed into a contribution-based risk-sharing system $\left(
\mathbf{X},\mathbf{C}\left(  \mathbf{X}\right)  \right)  $ where person $i$
receives $X_{i}$ at time $1$ and pays the following contribution at that
time:
\begin{equation}
C_{i}\left(  \mathbf{X}\right)  =\left(  1+R\right)  \times\pi_{i}+X_{i}%
-W_{i}\left(  \mathbf{X}\right)  ,\qquad i=1,2,\ldots,n. \label{E14}%
\end{equation}

As a first example, consider the contribution-based uniform risk-sharing
rule$\left(  \mathbf{X,C}^{\text{uni}}\left(  \mathbf{X}\right)  \right)  $
defined in (\ref{E8}).\ In this case, we have that
\[
C_{i}^{\text{uni}}\left(  \mathbf{X}\right)  =\frac{\sum_{j=1}^{n+1}X_{j}%
}{n+1},
\]
meaning that participant $i$ receives $X_{i}$ at time $1$, and at that same
time pays the contribution $\frac{\sum_{j=1}^{n+1}X_{j}}{n+1}$.\ Chosing a
premium vector $\boldsymbol{\pi}$\ and taking into account (\ref{E13}), we
transform $\left(  \mathbf{X,C}^{\text{uni}}\left(  \mathbf{X}\right)
\right)  $ into the compensation-based system $\left(  \boldsymbol{\pi
},\mathbf{W}\left(  \mathbf{X}\right)  \right)  $, with%
\[
W_{i}^{\text{uni}}\left(  \mathbf{X}\right)  =\left(  1+R\right)  \times
\pi_{i}+X_{i}-\frac{\sum_{j=1}^{n+1}X_{j}}{n+1}\mathbf{.}%
\]
In this system, participant $i$ pays the premium $\pi_{i}$ at time $0$, while
he receives the compensation $\left(  1+R\right)  \times\pi_{i}+X_{i}%
-\frac{\sum_{j=1}^{n+1}X_{j}}{n+1}$ at time $1$.

As a second example, consider the compensation-based risk-sharing scheme
$\left(  \boldsymbol{\pi},\mathbf{W}^{\text{prop}}\left(  \mathbf{X}\right)
\right)  $ defined in (\ref{E11}).\ We can transform this scheme in the
contribution-based scheme $\left(  \mathbf{X,C}^{\text{prop}}\left(
\mathbf{X}\right)  \right)  $, with%
\[
C_{i}^{\text{prop}}\left(  \mathbf{X}\right)  =\left(  1+R\right)  \times
\pi_{i}+X_{i}-\left(  1+R\right)  \times\left(  \sum_{j=1}^{n+1}\pi
_{j}\right)  \times\frac{X_{i}}{\sum_{j=1}^{n+1}X_{j}},\qquad\text{for
}i=1,2,\ldots,n+1
\]
In this system, at time $1$ participant $i$ receives $X_{i}$ and contributes
$C_{i}^{\text{prop}}\left(  \mathbf{X}\right)  $ to the insurance community.\

\end{document}